\documentclass[12pt]{article}
\usepackage{color}
\usepackage{bm}
\usepackage{graphicx}
\usepackage{epsfig}
\usepackage{amsmath}
\usepackage{amsfonts}
\usepackage{amssymb}
\usepackage{amsthm}
\usepackage{latexsym}

\newtheorem{theorem}{Theorem}
\newtheorem{lemma}{Lemma}

\usepackage{epsfig}
\usepackage{epstopdf}
\usepackage{pgf,fancyhdr}
\usepackage{float}
\topmargin=-12mm\oddsidemargin=-1mm
\textwidth=170mm\textheight=235mm
\usepackage{amsmath}

\footskip=45pt
\headsep=4truemm
\topmargin=0.3cm
\oddsidemargin=0pt
\evensidemargin=0pt
\textwidth=170truemm
\textheight=215truemm
\topskip0pt
\parindent=19pt
\parskip=0pt

\begin{document}

%\begin{CJK*}{GBK}{song}

%=================== Text begin here =============================================
\begin{center}
\LARGE\bf Tightening monogamy and polygamy relations of unified entanglement in multipartite systems
\end{center}

\begin{center}
\rm  Mei-Ming Zhang,$^1$ \  Naihuan Jing,$^{2,1,*}$  \ and  Hui Zhao,$^3$
\end{center}

\begin{center}
\begin{footnotesize} \sl
$^1$ Department of Mathematics, Shanghai University, Shanghai 200444, China

$^2$ Department of Mathematics, North Carolina State University, Raleigh, NC 27695, USA

$^3$ Department of Mathematics, Faculty of Science, Beijing University of Technology, Beijing 100124, China

$^*$ Corresponding author: jing@ncsu.edu

\end{footnotesize}
\end{center}

\begin{center}
\begin{minipage}{15.5cm}
\parindent 20pt\footnotesize
We study the monogamy and polygamy inequalities of unified entanglement in multipartite quantum systems. We first derive the monogamy inequality of unified-$(q, s)$ entanglement for multi-qubit states under arbitrary bipartition, and then obtain the monogamy inequalities of the $\alpha$th ($0\leq\alpha\leq\frac{r}{2}, r\geq\sqrt{2}$) power of entanglement of formation for tripartite states and their generalizations in multi-qubit quantum states. We also generalize the polygamy inequalities of unified-$(q, s)$ entanglement for multi-qubit states under arbitrary bipartition. Moreover, we investigate polygamy inequalities of the $\beta$th ($\beta\geq \max\{1, s\}, 0\leq s\leq s_0, 0\leq s_0\leq\sqrt{2}$) power of the entanglement of formation for $2\otimes2\otimes2$ and $n$-qubit quantum systems. Finally, using detailed examples, we show that the results are tighter than previous studies.
\end{minipage}
\end{center}

\begin{center}
\begin{minipage}{15.5cm}
\begin{minipage}[t]{2.3cm}{\bf Keywords:}\end{minipage}
\begin{minipage}[t]{13.1cm}
Monogamy, Polygamy, Unified-$(q, s)$ entanglement, Entanglement of formation
\end{minipage}\par\vglue8pt

\end{minipage}
\end{center}

\section{Introduction}
Quantum entanglement is an important phenomenon in quantum physics. In multipartite quantum systems, one subsystem's entanglement with other subsystems
 is usually limited to some extent by the entire system.
 %can be entangled with another and its entanglement with other subsystems is limited in some aspects.
 %limits its entanglement with others.
In other words, the entanglement relation between subsystems cannot be freely and unconditionally transitioned and this property is known as the monogamy of entanglement \cite{1}.
Monogamy relations exist for various entanglement measures and information-theoretic entropies
which underscore their importance and applications in quantum information processing.

Most notable entanglement measures include concurrence, negativity and their generalizations. The first quantitative monogamy relation regarding concurrence was established by Coffman, Kundu and Wootters in three-qubit syatems \cite{2}, and
the CKW inequality was later %The CKW inequality was 
generalized to arbitrary $n$-qubit quantum systems \cite{3}.
Monogamy inequality of negativity, %which is a new entanglement measure
like the CKW inequality, was given %established 
for three-qubit pure states and then extended to multi-qubits \cite{4}.
The authors in \cite{5} derived monogamy relations of the convex-roof extended negativity (CREN) and higher-dimensional extensions.
General monogamy inequalities were provided by the $\alpha$th ($\alpha\geq2$) power of concurrence for multi-qubit states \cite{6}.
A class of monogamy inequalities of $\alpha$th power of CREN regarding multiqubit entanglement for $\alpha\geq1$ were discovered in \cite{7}.
General monogamy relations were also derived for the $\beta$th ($0\leq\beta\leq2$) powers of concurrence, negativity, and CREN in \cite{8}.
The authors in \cite{9} proposed the tighter monogamy relations of the $\alpha$th ($0\leq\alpha\leq2$) power of concurrence under different partition.
Some tighter monogamy inequalities \cite{10, 11, 12}
were obtained for multipartite entangled systems in entanglement distributions.

Further monogamy relations for information-theoretic measures and entropies were discovered for the entanglement of formation (EoF) \cite{13, 14, 15}, the R\'{e}nyi-$q$ entropy \cite{16}, the Tsallis-$q$ entropy \cite{17} and the unified-$(q, s)$ entropy \cite{18}.
Using the Tsallis-$q$ entropy to quantify bipartite entanglement, monogamy of entanglement in multi-qubit systems was proposed in \cite{19}.
The $\alpha$th ($\alpha\geq2$) power of several quantum measures was also found to satisfy certain monogamy inequalities. This type of monogamy relations
was derived for the entanglement of formation ($E^\alpha$) in \cite{20}, the R\'{e}nyi-$q$ entropy ($R_q^\alpha$) in \cite{21}, and the Tsallis-$q$ entropy ($T_q^\alpha$) in \cite{22}.
Moreover, some tight monogamy inequalities of the $\alpha$th-power of unified-$(q, s)$ entanglement for $\alpha\geq1$  were also found for multipartite systems in \cite{23, 24}.
All these monogamy relations were presented separately and derived in different manners, but they displayed some similarity in the format and content. Thus 
 unified and tightened monogamy relations of entanglement measures were studied obtained in \cite{25, 26}. There seems to be a need to formulate a unified
treatment for all these entanglement measures in bipartite systems and even multipartite quantum systems.

 It is known that the assisted entanglement has a dually monogamous property in multipartite systems.  Similarly, polygamy inequalities also provide some bounds for the distribution of entanglement of multipartite quantum states.
The polygamy relation was first established in terms of the entanglement of assistance for three-qubit systems \cite{27}.
It was later generalized to multiqubit systems by using various assisted entanglements \cite{17, 18, 28}. For the arbitrary-dimensional quantum systems, using entanglement of assistance, general polygamy inequalities of multipartite entanglement were also proposed in \cite{29, 30, 31}.
Using Hamming weight of the binary vectors related with the distribution of subsystems, some tighter polygamy inequalities of entanglement of assistance were derived in multipartite quantum systems \cite{24, 32}.
The authors in  \cite{24, 26} provided some polygamy inequalities in terms of unified entanglements.
In \cite{33}, polygamy inequalities of the $\beta$th ($0\leq\beta\leq\alpha$) power of quantum correlations based on residual quantum correlations were presented.

%We will provide characterization of entanglement constraints in terms of the unified-$(q,s)$ entropy and give a more general form on the real parameters $q$ and $s$.

In this paper, we will present a unified and tighter monogamy and polygamy relations for all aforementioned important entanglement measures and entropies, which include the unified-$(q, s)$ entanglement, the R\'{e}nyi-$q$ entropy, the Tsallis-$q$ entropy, and the entanglement of formation for multipartite systems. In other words, our formulation of the
entanglement constraints is done in terms of the unified general $(q, s)$ entropy, which specializes to
the aforementioned various entropies and measurements when $q, s$ take special values. In this way, we hope to see the intrinsic relationships among various monogamy relations. We remark that the $\beta$th ($\beta\geq1$) power of unified entanglement has a different range of $\beta$ from that of \cite{33} and both have no overlaps in multi-qubit quantum systems.
The polygamy inequalities considered in our case for %in terms of the assistance of
 the Tsallis $q$-entropy and $q$-expectation ($q\geq1$) are tighter than those provided in \cite{34}.

The layout of the paper is as follows. In Section 2, we obtain the monogamy inequality of unified-$(q,s)$ entanglement for any multipartite system under arbitrary bipartition. The monogamy inequalities of entanglement of formation for $2\otimes2\otimes2$ and $n$-qubit quantum states are presented. Then the monogamy relation is generalized to several measures of entanglement for multipartite quantum systems. We show that our results are tighter than previous results by detailed examples.
%By detailed example, our results are seen to be superior to the previously published results.
In Section 3, the polygamy inequality of unified entanglement with respect to bipartition is obtained for the multipartite quantum system. Then we derive the polygamy inequalities of entanglement for $2\otimes2\otimes2$ and $n$-qubit quantum systems. We also give examples to show that our bounds are tighter than previous available results. Comments and conclusions are given in Section 4.

\section{Monogamy relations of quantum correlations}\label{sec2}
For a quantum state $\rho$, the unified-$(q,s)$ entropy is defined by \cite{18}:
\begin{eqnarray}
S_{q,s}(\rho):=\frac{1}{(1-q)s}[(\mathrm{tr}\rho^q)^s-1],
\end{eqnarray}
where $q\geq0, q\neq1$ and $s>0$.
The unified-$(q, s)$ entropy specializes to the R\'{e}nyi-$q$ entropy $R_q(\rho)=\frac{1}{1-q}\log[\mathrm{tr}(\rho^q)]$ as $s$ tends to $0$,
the Tsallis-q entropy $T_q(\rho)=\frac{1}{1-q}[\mathrm{tr}(\rho^q)-1]$ as $s$ tends to 1, and
the von Neumann entropy $S(\rho)=-\mathrm{tr}(\rho \log\rho)$ as $q$ tends 1.
%When $q=1$ or $s=0$, the unified-$(q,s)$ entropy can be considered as the von Neumann entropy or the R\'{e}nyi-$q$ entropy respectively.
For this reason, we also denote $S_{1,s}(\rho)\equiv S(\rho)$ and $S_{q,0}(\rho)\equiv R_q(\rho)$.
The unified-$(q, s)$ entanglement of a bipartite pure state $|\varphi\rangle_{A_1A_2}\in H_{A_1}\otimes\ H_{A_2}$ is defined by
\begin{eqnarray}
E_{q,s}(|\varphi\rangle_{A_1A_2}):=S_{q,s}(\rho_{A_1}),
\end{eqnarray}
where $q,s\geq0$, and $\rho_{A_1}$ is the reduced density matrix of $\rho=|\varphi\rangle_{A_1A_2}\langle\varphi|$, $\rho_{A_1}=\mathrm{tr}_{A_2}(\rho)$. For a mixed bipartite quantum state $\rho_{A_1A_2}=\sum_ip_i|\varphi_i\rangle_{A_1A_2}\langle\varphi_i|\in H_{A_1}\otimes H_{A_2}$, its unified-$(q, s)$ entanglement is defined by the convex roof as usual:
\begin{eqnarray}
E_{q,s}(\rho_{A_1A_2}):=\min\sum_ip_iE_{q,s}(|\varphi_i\rangle_{A_1A_2}),
\end{eqnarray}
where the minimum is taken over all possible convex partitions of $\rho_{A_1A_2}$ into pure state ensembles $\{p_i,|\varphi_i\rangle\}$, $0\leq p_i\leq1$ and $\sum_ip_i=1$.

When $s$ tends to 0 or 1, the unified-$(q, s)$ entanglement of $\rho_{A_1A_2}$ reduces to one-parameter class of entanglement measures---the R\'{e}nyi-$q$ entanglement $R_q(\rho_{A_1A_2})$ or the Tsallis-$q$ entanglement $T_q(\rho_{A_1A_2})$ respectively. As $q$ tends to 1, the unified-$(q, s)$ entanglement of $\rho_{A_1A_2}$ converges to the entanglement of formation (EoF) $E_f(\rho_{A_1A_2})$.

Let $H_{A_1}$ and $H_{A_2}$ be $d_{A_1}$- and $d_{A_2}$-dimensional Hilbert spaces respectively. The concurrence of a bipartite quantum pure state $|\varphi\rangle_{A_1A_2}\in H_{A_1}\otimes\ H_{A_2}$ is defined by \cite{35}:
\begin{eqnarray}
C(|\varphi\rangle_{A_1A_2})=\sqrt{2[1-\mathrm{tr}(\rho_{A_1}^2)]},
\end{eqnarray}
where $\rho_{A_1}$ is the reduced density matrix $\rho_{A_1}=\mathrm{tr}_{A_2}(\rho)$ of $\rho=|\varphi\rangle_{A_1A_2}\langle\varphi|$. For a mixed bipartite quantum state $\rho_{A_1A_2}=\sum_ip_i|\varphi_i\rangle_{A_1A_2}\langle\varphi_i|\in H_{A_1}\otimes H_{A_2}$, the concurrence is given by the convex roof:
\begin{eqnarray}
C(\rho_{A_1A_2})=\min_{\{p_i,|\varphi_i\rangle\}}\sum_ip_iC(|\varphi_i\rangle_{A_1A_2}),
\end{eqnarray}
where the minimum is taken over all possible convex partitions of $\rho_{AB}$ into pure state ensembles $\{p_i,|\varphi_i\rangle\}$, $0\leq p_i\leq1$ and $\sum_ip_i=1$.

The concurrence of a 2-qubit mixed state $\rho$ is given by the remarkable formula \cite{2}:
\begin{eqnarray}
C(\rho)=\max\{\lambda_1-\lambda_2-\lambda_3-\lambda_4, 0\},
\end{eqnarray}
where $\lambda_{i}$, $i=1,\cdots,4$, are the square roots of nonnegative eigenvalues of the matrix $\rho(\sigma_y\otimes\sigma_y)\rho^*(\sigma_y\otimes\sigma_y)$ arranged in decreasing order, $\sigma_y$ is the Pauli matrix, and $\rho^*$ denotes the complex conjugate of $\rho$.

For any $2\otimes d$ pure state $|\varphi\rangle_{A_1A_2}$, the unified-$(q, s)$ entanglement and the concurrence satisfy the functional equation \cite{36}:
\begin{eqnarray}
E_{q, s}(|\varphi\rangle_{A_1A_2})=f_{q,s}(C^2(|\varphi\rangle_{A_1A_2})),
\end{eqnarray}
where $f_{q,s}(x)=\frac{((1+\sqrt{1-x^2})^q+(1-\sqrt{1-x^2})^q)^s-2^{qs}}{(1-q)s2^{qs}}$ with $0\leq x\leq1$. Similar relation holds for 2-qubit mixed states with $0\leq s\leq1$ and $1\leq q\leq\frac{s}{3}$.

For an $n$-qubit quantum state $\rho_{A_1|A_2A_3\cdots A_n}$, the unified-$(q, s)$ entanglement obeys the inequality \cite{23}:
\begin{eqnarray}
E_{q,s}^\alpha(\rho_{A_1|A_2A_3\cdots A_n})\geq E_{q,s}^\alpha(\rho_{A_1A_2})+E_{q,s}^\alpha(\rho_{A_1A_3})+\cdots+E_{q,s}^\alpha(\rho_{A_1A_n}),
\end{eqnarray}
where $\rho_{A_1|A_2A_3\cdots A_n}$ is a quantum state under bipartition ${A_1}$ and ${A_2A_3\cdots A_n}$, $\alpha\geq1$, $q\geq2$, $0\leq s\leq1$, $qs\leq3$.

\begin{lemma}\label{lemma:1}
For real numbers $k\geq1$ and $t\geq k$,

(1) if $0\leq x\leq\frac{1}{2}$, we have
\begin{eqnarray}
(1+t)^x\geq(\frac{1}{2})^x+\frac{(1+k)^x-(\frac{1}{2})^x}{k^x}t^x.
\end{eqnarray}

(2) if $x\geq1$, we have
\begin{eqnarray}
(1+t)^x\leq(\frac{1}{2})^x+\frac{(1+k)^x-(\frac{1}{2})^x}{k^x}t^x.
\end{eqnarray}
\end{lemma}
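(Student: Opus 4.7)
The plan is to reformulate both inequalities as a single monotonicity statement. Since $(1/2)^x$ is the constant term and $t^x$ multiplies the slope in both claims, I would divide through by $t^x>0$ and introduce the auxiliary function
\[
g(t) := \frac{(1+t)^x-(1/2)^x}{t^x},\qquad t\geq 1.
\]
A direct substitution gives $g(k)=[(1+k)^x-(1/2)^x]/k^x$, so inequality (1) becomes $g(t)\geq g(k)$ and inequality (2) becomes $g(t)\leq g(k)$ for $t\geq k$. It therefore suffices to prove that $g$ is monotone increasing on $[1,\infty)$ when $0\leq x\leq 1/2$ and monotone decreasing on $[1,\infty)$ when $x\geq 1$.

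Next I would compute the derivative. Applying the quotient rule together with the algebraic simplification $t(1+t)^{x-1}-(1+t)^x=-(1+t)^{x-1}$ yields
\[
g'(t)=\frac{x\bigl[(1/2)^x-(1+t)^{x-1}\bigr]}{t^{x+1}}.
\]
Since $t^{x+1}>0$ and $x\geq 0$ in both cases, the sign of $g'(t)$ is governed entirely by the sign of $2^{-x}-(1+t)^{x-1}$, reducing the lemma to comparing two powers.

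For part (1), with $0\leq x\leq 1/2$, the exponent $x-1$ is negative, so $(1+t)^{x-1}$ is decreasing in $t$; for $t\geq 1$ this gives $(1+t)^{x-1}\leq 2^{x-1}$. Combined with the elementary inequality $2^{-x}\geq 2^{x-1}$, which is equivalent to $x\leq 1/2$, I obtain $(1/2)^x\geq (1+t)^{x-1}$, hence $g'(t)\geq 0$ and $g$ is increasing on $[1,\infty)$. For part (2), with $x\geq 1$, the exponent $x-1$ is nonnegative, so $(1+t)^{x-1}$ is increasing in $t$ and bounded below by $2^{x-1}$ for $t\geq 1$; the reversed elementary inequality $2^{-x}\leq 2^{x-1}$, valid since $x\geq 1/2$, gives $(1/2)^x\leq (1+t)^{x-1}$, so $g'(t)\leq 0$ and $g$ is decreasing. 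In either regime, the hypothesis $t\geq k\geq 1$ then produces the desired comparison between $g(t)$ and $g(k)$.

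The main obstacle is really only the initial reformulation: once one recognizes that the right-hand sides of (1) and (2) are affine functions of $t^x$ with common constant term $(1/2)^x$, the substitution $g(t)=[(1+t)^x-(1/2)^x]/t^x$ presents itself and the remainder is a short derivative computation. The analytic content is concentrated in the single threshold identity $2^{-x}=2^{x-1}$ at $x=1/2$, which both explains why the lemma naturally splits into the two stated ranges of $x$ and why the direction of the inequality reverses between them.
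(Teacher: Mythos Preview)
Your proof is correct and follows essentially the same route as the paper: both arguments reduce the inequality to the monotonicity of $[(1+t)^x-(1/2)^x]/t^x$ in $t$ (the paper uses the substitution $y=1/t$ but this is cosmetic), compute its derivative, and then determine the sign of $(1/2)^x-(1+t)^{x-1}$. Your sign analysis, comparing $2^{-x}$ with $2^{x-1}$ after bounding $(1+t)^{x-1}$ by $2^{x-1}$, is a minor but slightly more direct variant of the paper's argument, which instead fixes $y$ and shows $(1+1/y)^{x-1}-(1/2)^x$ is increasing in $x$ and nonpositive at $x=1/2$.
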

\begin{proof} Two inequalities are proved similar. We just check the first one.
Consider  $g(x,y)=(1+\frac{1}{y})^{x-1}-(\frac{1}{2})^x$ where $0\leq x\leq\frac{1}{2}$ and $0<y\leq\frac{1}{k}$ with real number $k\geq1$. Then $\frac{\partial g}{\partial x}=(1+\frac{1}{y})^{x-1}ln(1+\frac{1}{y})-(\frac{1}{2})^xln\frac{1}{2}>0$ as $1+\frac{1}{y}\geq2$. So $g(x,y)$ is an increasing function of $x$ when $y$ is fixed, i.e, $g(x,y)\leq g(\frac{1}{2},y)=(1+\frac{1}{y})^{-\frac{1}{2}}-(\frac{1}{2})^{\frac{1}{2}}\leq0$ as $0<(1+\frac{1}{y})^{-1}\leq\frac{1}{2}$. Let $f(x,y)=(1+y)^x-(\frac{1}{2}y)^x$ with $0\leq x\leq\frac{1}{2}$ and $0<y\leq\frac{1}{k}$. As $(1+\frac{1}{y})^{x-1}-(\frac{1}{2})^x\leq0$, then $\frac{\partial f}{\partial y}=xy^{x-1}[(1+\frac{1}{y})^{x-1}-(\frac{1}{2})^x]\leq0$. Thus $f(x,y)$ is a decreasing function of $y$, %i.e, $f(x,y)\geq f(x,\frac{1}{k})=\frac{(1+k)^x-(\frac{1}{2})^x}{k^x}$.
so for $t\geq k$, $f(x,\frac{1}{t})=\frac{(1+t)^x-(\frac{1}{2})^x}{t^x}\geq f(x, \frac1k)=\frac{(1+k)^x-(\frac{1}{2})^x}{k^x}$. Therefore $(1+t)^x\geq(\frac{1}{2})^x+\frac{(1+k)^x-(\frac{1}{2})^x}{k^x}t^x$.
%Inequality (10) can be proved similarly. %is similar to the one given in above.
\end{proof}

\begin{lemma}\label{lemma:2}
For nonnegative numbers $p_1\geq p_2\geq\cdots\geq p_n$,

(1) if $0\leq x\leq\frac{1}{2}$, one has
\begin{eqnarray}
(p_1+p_2+\cdots+p_n)^x\geq(\frac{1}{2})^x(l^{n-1}p_1^x+l^{n-2}p_2^x+\cdots+p_n^x),
\end{eqnarray}
where $l=\frac{(1+k)^x-(\frac{1}{2})^x}{k^x}$ with $k\geq1$.

(2) if $x\geq1$, one has
\begin{eqnarray}
(p_1+p_2+\cdots+p_n)^x\leq(\frac{1}{2})^x(l^{n-1}p_1^x+l^{n-2}p_2^x+\cdots+p_n^x),
\end{eqnarray}
where $l=\frac{(1+k)^x-(\frac{1}{2})^x}{k^x}$ with $k\geq1$.
\end{lemma}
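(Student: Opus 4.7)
The plan is to prove both parts by induction on $n$, using Lemma \ref{lemma:1} as the one-step reduction. Parts (1) and (2) are completely parallel, differing only in the direction of the inequality and the relevant range of $x$, so I will describe the argument for part (1); part (2) is obtained by replacing Lemma \ref{lemma:1}(1) with Lemma \ref{lemma:1}(2) throughout and reversing inequalities.

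For the base case $n=2$, I would factor $(p_1+p_2)^x = p_2^x(1+p_1/p_2)^x$ and set $t=p_1/p_2$. The hypothesis $p_1\geq p_2$ together with a compatible choice of $k$ gives $t\geq k\geq 1$, so Lemma \ref{lemma:1}(1) yields $(1+t)^x\geq(1/2)^x+lt^x$. Multiplying through by $p_2^x$ recovers $(p_1+p_2)^x\geq(1/2)^x(lp_1^x+p_2^x)$, which is the $n=2$ case.

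For the inductive step, assume the inequality has been established for $n-1$ nonnegative decreasing numbers. Given $p_1\geq\cdots\geq p_n$, I would write the sum as $(p_1+\cdots+p_{n-1})+p_n$ and factor $p_n^x$ out of $(p_1+\cdots+p_n)^x$. Because $p_n$ is the smallest summand, the resulting ratio $(p_1+\cdots+p_{n-1})/p_n$ is at least $n-1\geq 1$, so Lemma \ref{lemma:1}(1) applies with $t$ equal to this ratio and produces $(p_1+\cdots+p_n)^x\geq(1/2)^x p_n^x+l(p_1+\cdots+p_{n-1})^x$. Invoking the inductive hypothesis to bound $(p_1+\cdots+p_{n-1})^x$ below by $(1/2)^x(l^{n-2}p_1^x+l^{n-3}p_2^x+\cdots+p_{n-1}^x)$, then multiplying by $l$ and combining with the leftover $(1/2)^x p_n^x$ term, yields exactly $(1/2)^x(l^{n-1}p_1^x+l^{n-2}p_2^x+\cdots+lp_{n-1}^x+p_n^x)$, which is the target bound.

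The only subtle point is ensuring $t\geq k$ at every application of Lemma \ref{lemma:1}. Peeling off the smallest summand $p_n$ at each stage guarantees $t\geq n-1\geq 1$, consistent with the standing assumption $k\geq 1$; any further lower bound on $k$ needed in the downstream applications (e.g.\ to ratios of squared concurrences in the monogamy theorems of Section \ref{sec2}) is inherited from the setup there, so no additional work is required inside the lemma itself.
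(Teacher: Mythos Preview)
Your approach is essentially identical to the paper's: both argue by induction on $n$, peel off the smallest term $p_n$ (the paper writes $p_{k+1}$), apply Lemma~\ref{lemma:1} to the resulting ratio, and then invoke the inductive hypothesis on the remaining $n-1$ terms; the only cosmetic difference is that the paper takes $n=1$ as the trivial base case while you start at $n=2$. Your flagged concern about verifying $t\geq k$ at each step is precisely the point the paper handles by overloading the symbol $k$ as both the fixed parameter in $l$ and the induction index (so that ``$\tau\geq k$'' there means the ratio is at least the number of summands), so your treatment is at least as explicit as the original.
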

\begin{proof} These two inequalities are shown by induction on $n$ similarly. Take the first one for example. The case of $n=1$ holds trivially.
%The inequality (11) holds for $n=1$ obviously.
Assume that inequality (11) holds for $n=k$ with $k\geq1$.
Next we consider the case of $n=k+1$. When $p_{k+1}=0$, the inequality (11) holds obviously. Let $p_{k+1}\neq0$ and $\tau=\frac{p_1+p_2+\cdots+p_k}{p_{k+1}}$, we get $\tau\geq k$ as $p_1\geq p_2\geq\cdots\geq p_{k+1}>0$. Then we get
\begin{eqnarray}
(p_1+p_2+\cdots+p_k+p_{k+1})^x&=&p_{k+1}^x(1+\frac{p_1+p_2+\cdots+p_k}{p_{k+1}})^x\nonumber\\
&=&p_{k+1}^x(1+\tau)^x\nonumber\\
&\geq & p_{k+1}^x[(\frac{1}{2})^x+l\tau^x]\nonumber\\
&=&(\frac{1}{2})^xp_{k+1}^x+l(p_1+p_2+\cdots+p_k)^x,
\end{eqnarray}
where $l=\frac{(1+k)^x-(\frac{1}{2})^x}{k^x}$ and the inequality is due to (9) of Lemma 1. Combing this with the inequality of $n=k$ completes the proof. % The proof of (12) is similar to the above.
\end{proof}

Next we consider the unified entanglement of $\rho_{A_1A_2\cdots A_m|A_{m+1}\cdots A_n}$, $m=1,\cdots,n-1$, with respect to the bipartition $A_1A_2\cdots A_m$ and $A_{m+1}\cdots A_n$. For any $n$-qubit state $\rho_{A_1A_2\cdots A_m|A_{m+1}\cdots A_n}$, $m=1,\cdots, n-1$, on Hilbert space $H_{A_1}\otimes \cdots \otimes H_{A_n}$, the unified entanglement satisfies the following [25]:
\begin{eqnarray}
E_{q,s}^\alpha(\rho_{A_1A_2\cdots A_m|A_{m+1}\cdots A_n})\geq\sum_{i=1}^m\sum_{j=m+1}^nE_{q,s}^\alpha(\rho_{A_iA_j}),
\end{eqnarray}
where $\alpha\geq1$, $q\geq2$, $0\leq s\leq1$, $qs\leq3$, and $\rho_{A_iA_j}$, $i=1,\cdots,m$, $j=m+1,\cdots,n$ are reduced density operators of $\rho_{A_1A_2\cdots A_m|A_{m+1}\cdots A_n}$. Using Lemma 2, we can derive the monogamy inequality of multi-qubit states under arbitrary bipartition based on the $\alpha$th-power of unified-$(q, s)$ entanglement for $0\leq\alpha\leq\frac{r}{2}$ with $r\geq1$.

\begin{theorem}\label{thm:1}
For any $n$-qubit quantum state $\rho_{A_1A_2\cdots A_m|A_{m+1}\cdots A_n}$, $m=1,\cdots, n-1$, and real number $k\geq1$, $q\geq2$, $0\leq s\leq1$ and $qs\leq3$, we have that
\begin{eqnarray}
E_{q,s}^\alpha(\rho_{A_1A_2\cdots A_m|A_{m+1}\cdots A_n})\geq(\frac{1}{2})^{\frac{\alpha}{r}}\sum_{i=1}^{n-m}\sum_{j=0}^{m-1}l^{(m-j)(n-m)-i}E_{q,s}^\alpha(\rho_{A_{j+1}A_{m+i}}),
\end{eqnarray}
where $0\leq\alpha\leq\frac{r}{2}$, $r\geq1$ and $l=\frac{(1+k)^{\frac{\alpha}{r}}-(\frac{1}{2})^{\frac{\alpha}{r}}}{k^{\frac{\alpha}{r}}}$.
\end{theorem}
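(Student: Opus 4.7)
The plan is to combine the known multi-qubit unified entanglement inequality (14) with the power-sum estimate in Lemma \ref{lemma:2}(1). Since $r\geq 1$, inequality (14) applied with exponent $r$ already yields
$$E_{q,s}^{r}(\rho_{A_1A_2\cdots A_m|A_{m+1}\cdots A_n})\geq\sum_{i=1}^{m}\sum_{j=m+1}^{n}E_{q,s}^{r}(\rho_{A_iA_j}).$$
Because $\alpha/r\in[0,1/2]$ and the right-hand side is nonnegative, raising both sides to the $\alpha/r$ power preserves the direction of the inequality, so that
$$E_{q,s}^{\alpha}(\rho_{A_1A_2\cdots A_m|A_{m+1}\cdots A_n})\geq\left(\sum_{i=1}^{m}\sum_{j=m+1}^{n}E_{q,s}^{r}(\rho_{A_iA_j})\right)^{\alpha/r}.$$

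Next, I would re-index the $m(n-m)$ bipartite pairs $(A_i,A_j)$ with $i\leq m<j$ into a single linearly ordered list by relabeling, without loss of generality, so that $E_{q,s}^{r}(\rho_{A_{j+1}A_{m+i}})$ is nonincreasing in the lexicographic order on $(j,i)$ with $j=0,\ldots,m-1$ (slowest index) and $i=1,\ldots,n-m$ (fastest index). Under this convention the pair indexed by $(j,i)$ occupies position $j(n-m)+i$ in the decreasing sequence, so its distance from the tail is $N-(j(n-m)+i)=(m-j)(n-m)-i$, where $N=m(n-m)$. Applying Lemma \ref{lemma:2}(1) with $x=\alpha/r\in[0,1/2]$ and $p_t$ equal to the $t$-th element of this decreasing sequence then produces
$$\left(\sum_{i=1}^{n-m}\sum_{j=0}^{m-1}E_{q,s}^{r}(\rho_{A_{j+1}A_{m+i}})\right)^{\alpha/r}\geq\left(\frac{1}{2}\right)^{\alpha/r}\sum_{i=1}^{n-m}\sum_{j=0}^{m-1}l^{(m-j)(n-m)-i}\,E_{q,s}^{\alpha}(\rho_{A_{j+1}A_{m+i}}),$$
where I have used $(E_{q,s}^{r})^{\alpha/r}=E_{q,s}^{\alpha}$ and $l$ is as in the theorem. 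Chaining this with the previous display yields the desired inequality.

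The only genuinely subtle point I anticipate is the ordering convention: the exponents $l^{(m-j)(n-m)-i}$ are attached to positions in a sorted list, whereas the theorem writes the bound in terms of the original bipartite indices $(A_{j+1},A_{m+i})$. I would address this explicitly at the start of the proof by noting that we may permute the labels on each side of the bipartition so that $E_{q,s}^{r}(\rho_{A_{j+1}A_{m+i}})$ is decreasing in the lexicographic order above, since both (14) and the convex-roof definition of $E_{q,s}$ are symmetric under such relabelings. Once this reduction is made, the remainder of the argument is purely substitutional: apply (14) at exponent $r$, take the $\alpha/r$ power, and invoke Lemma \ref{lemma:2}(1) once.
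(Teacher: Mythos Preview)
Your proposal is correct and follows essentially the same route as the paper's proof: relabel the subsystems so that the bipartite entanglements $E_{q,s}^{r}(\rho_{A_iA_j})$ are arranged in nonincreasing (lexicographic) order, invoke inequality~(14) at exponent $r$, raise to the $\alpha/r$ power, and then apply Lemma~\ref{lemma:2}(1) with $x=\alpha/r$ to the resulting sum of $m(n-m)$ terms. Your explicit bookkeeping of the position $j(n-m)+i$ versus the ``distance from the tail'' $(m-j)(n-m)-i$ is exactly the computation that produces the exponents on $l$ in the paper's display, and your remark about the relabeling convention matches the paper's opening sentence of the proof.
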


\begin{proof} We can relabel the subsystems so that $E_{q,s}^r(\rho_{A_iA_j})\geq E_{q,s}^r(\rho_{A_iA_{j+1}})\geq  E_{q,s}^r(\rho_{A_{i+1}A_{m+1}})$ with $i=1,\cdots,m-1$, $j=m+1,\cdots,n-1$.
It follows from
inequality (11) of Lemma 2 and (14) that
%we relabel the subsystems to make $E_{q,s}^r(\rho_{A_iA_j})\geq E_{q,s}^r(\rho_{A_iA_{j+1}})\geq  E_{q,s}^r(\rho_{A_{i+1}A_{m+1}})$ with $i=1,\cdots,m-1$, $j=m+1,\cdots,n-1$, then we have,
\begin{eqnarray}
&&E_{q,s}^\alpha(\rho_{A_1A_2\cdots A_m|A_{m+1}\cdots A_n})\nonumber\\
&=&(E_{q,s}^r(\rho_{A_1A_2\cdots A_m|A_{m+1}\cdots A_n}))^{\frac{\alpha}{r}}\nonumber\\
&\geq&(\sum_{i=1}^m\sum_{j=m+1}^nE_{q,s}^r(\rho_{A_iA_j}))^{\frac{\alpha}{r}}\nonumber\\
&\geq&(\frac{1}{2})^{\frac{\alpha}{r}}(\sum_{i=1}^{n-m}l^{m(n-m)-i}E_{q,s}^\alpha(\rho_{A_1A_{m+i}})+\sum_{i=1}^{n-m}l^{(m-1)(n-m)-i}E_{q,s}^\alpha(\rho_{A_2A_{m+i}})+\cdots\nonumber\\
&&+\sum_{i=1}^{n-m}l^{n-m-i}E_{q,s}^\alpha(\rho_{A_mA_{m+i}}))\nonumber\\
&=&(\frac{1}{2})^{\frac{\alpha}{r}}\sum_{i=1}^{n-m}\sum_{j=0}^{m-1}l^{(m-j)(n-m)-i}E_{q,s}^\alpha(\rho_{A_{j+1}A_{m+i}}),
\end{eqnarray}
where $0\leq\alpha\leq\frac{r}{2}$, $r\geq1$, $k\geq1$, $q\geq2$, $0\leq s\leq1$ and $qs\leq3$.

\end{proof}
{\bf Remark 1.} For the $\alpha$th power of unified-$(q, s)$ entanglement, Theorem 1 provides a general monogamy relation for $0\leq \alpha\leq\frac{r}{2}$ and $r\geq1$. When $s$ tends to $0$ or $1$, Theorem 1 gives
the monogamy inequalities of the R\'{e}nyi-$q$ entanglement or the Tsallis-$q$ entanglement respectively. When $q$ tends to 1, the monogamy inequality for entanglement of formation (EoF) is also obtained from our
general result. %can be obtained.

In the following we discuss the EoF as an analytical unified-$(q, s)$ entanglement under bipartite partition $A_1|A_2A_3\cdots A_n$ and prove tighter monogamy relations. We first give some basic definition. %provide some preliminary definition.
Let $H_{A_1}$ and $H_{A_2}$ be $m$ and $n$ ($m\leq n$) dimensional Hilbert spaces respectively. The EoF of a pure quantum state $|\varphi\rangle_{A_1A_2}\in H_{A_1}\otimes H_{A_2}$ is defined by \cite{37}
\begin{eqnarray}
E(|\varphi\rangle_{A_1A_2})=S(\rho_{A_1} ),
\end{eqnarray}
where $\rho_{A_1}=\mathrm{tr}_{A_2}(|\varphi\rangle_{A_1A_2})$ and $S(\rho_{A_1})=-\mathrm{tr}(\rho_{A_1} \log\rho_{A_1})$.
For a mixed bipartite quantum state $\rho_{A_1A_2}=\sum_ip_i|\varphi_i\rangle\langle\varphi_i|\in H_{A_1}\otimes H_{A_2}$, the EoF is given by the convex roof
\begin{eqnarray}
E(\rho_{A_1A_2})=\min_{\{p_i,|\varphi_i\rangle\}}\sum_ip_iE(|\varphi_i\rangle),
\end{eqnarray}
where the minimum is taken over all possible convex partitions of $\rho_{A_1A_2}$ into pure state ensembles $\{p_i,|\varphi_i\rangle\}$, where $0\leq p_i\leq1$ and $\sum_ip_i=1$.

For a $2\otimes m (m\geq2)$ pure state $|\varphi\rangle$, Wootters obtained that $E(|\varphi\rangle)=f(C^2(|\varphi\rangle))$, and $E(\rho)=f(C^2(\rho))$ for 2-qubit mixed states, where $f(x)=h(\frac{1+\sqrt{1-x}}{2})$ and $h(x)=-x\log x-(1-x)\log(1-x)$ in \cite{37}. The function
$f(x)$ is a monotonically increasing one for $0\leq x\leq1$, and $f^{\sqrt{2}}(x^2+y^2)\geq f^{\sqrt{2}}(x^2)+f^{\sqrt{2}}(y^2)$ in \cite{6}. By using $(1+t)^x\geq1+t^x$ for $x\geq1$ and $0\leq t\leq1$, we have $f^r(x^2+y^2)\geq f^r(x^2)+f^r(y^2)$ for $r\geq\sqrt{2}$.
\begin{lemma}\label{lemma:3}
If $f^r(y^2)\geq kf^r(x^2)$, we have
\begin{eqnarray}
f^\alpha(x^2+y^2)\geq(\frac{1}{2})^{\frac{\alpha}{r}}f^\alpha(x^2)+\frac{(1+k)^{\frac{\alpha}{r}}-(\frac{1}{2})^{\frac{\alpha}{r}}}{k^{\frac{\alpha}{r}}}f^\alpha(y^2),
\end{eqnarray}
where $0\leq x,y\leq1$, $0\leq\alpha\leq\frac{r}{2}$, $r\geq\sqrt{2}$, and $k\geq1$.
\end{lemma}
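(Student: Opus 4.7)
The plan is to combine two tools already supplied by the paper: the superadditivity bound $f^r(x^2+y^2)\geq f^r(x^2)+f^r(y^2)$ for $r\geq\sqrt{2}$, recalled immediately before the statement, and the real-variable inequality Lemma 1(1), applied with exponent $\alpha/r$. The hypothesis $0\leq\alpha\leq r/2$ with $r\geq\sqrt{2}$ puts $\alpha/r$ into $[0,1/2]$, which is exactly the range where Lemma 1(1) is valid, so the two pieces are engineered to plug directly into one another.

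First I would raise the superadditivity bound to the power $\alpha/r$; since $u\mapsto u^{\alpha/r}$ is nondecreasing on $[0,\infty)$, this gives
\begin{equation*}
f^\alpha(x^2+y^2) = \bigl(f^r(x^2+y^2)\bigr)^{\alpha/r} \geq \bigl(f^r(x^2)+f^r(y^2)\bigr)^{\alpha/r}.
\end{equation*}
Assume for the moment that $f^r(x^2)>0$. Factor out $f^r(x^2)$ and set $t=f^r(y^2)/f^r(x^2)$; the hypothesis $f^r(y^2)\geq k\,f^r(x^2)$ is precisely $t\geq k$, so Lemma 1(1) with exponent $\alpha/r\in[0,1/2]$ yields
\begin{equation*}
(1+t)^{\alpha/r} \geq \Bigl(\frac{1}{2}\Bigr)^{\alpha/r} + \frac{(1+k)^{\alpha/r}-(\frac{1}{2})^{\alpha/r}}{k^{\alpha/r}}\,t^{\alpha/r}.
\end{equation*}
Multiplying through by $f^\alpha(x^2)=(f^r(x^2))^{\alpha/r}$ and recognizing $(f^r(x^2))^{\alpha/r}\,t^{\alpha/r}=f^\alpha(y^2)$ reproduces exactly the inequality claimed in the lemma.

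The only subtlety is the degenerate case $f^r(x^2)=0$. Since $f$ vanishes only at the origin this forces $x=0$, and the claim reduces to verifying that the coefficient $l=((1+k)^{\alpha/r}-(1/2)^{\alpha/r})/k^{\alpha/r}$ is at most $1$. A short calculus check, namely that $(1+k)^{a}-k^{a}$ is decreasing in $k\geq 1$ for $a\leq 1$ and is therefore maximized at $k=1$ giving $2^{a}-1$, combined with the elementary bound $2^{a}-1\leq 2^{-a}$ on $a\in[0,1/2]$, confirms $l\leq 1$, so this edge case is automatic. I do not expect any real obstacle: the argument is essentially a single substitution into results already established in the paper, and the only thing to track carefully is the alignment of parameter ranges $0\leq\alpha/r\leq 1/2$ so that Lemma 1(1) and the $f^r$-superadditivity apply simultaneously with the same exponent.
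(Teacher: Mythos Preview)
Your argument is correct and is essentially identical to the paper's own proof: the paper likewise raises the superadditivity bound $f^r(x^2+y^2)\geq f^r(x^2)+f^r(y^2)$ to the power $u=\alpha/r$, factors out $f^{ru}(x^2)$, and applies Lemma~1(1) to the ratio $t=f^r(y^2)/f^r(x^2)\geq k$. Your treatment of the degenerate case $f^r(x^2)=0$ is an improvement over the paper, which tacitly divides by $f^r(x^2)$ without comment.
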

\begin{proof}
When$f^r(y^2)\geq kf^r(x^2)$, we have
\begin{eqnarray}
f^\alpha(x^2+y^2)=f^{r u}(x^2+y^2)&\geq&(f^r(x^2)+f^r(y^2))^u\nonumber\\
&=&f^{r u}(x^2)(1+\frac{f^r(y^2)}{f^r(x^2)})^u\nonumber\\
&\geq& f^{r u}(x^2)[(\frac{1}{2})^u+\frac{(1+k)^u-(\frac{1}{2})^u}{k^u}(\frac{f^r(y^2)}{f^r(x^2)})^u]\nonumber\\
&=&(\frac{1}{2})^uf^{r u}(x^2)+\frac{(1+k)^u-(\frac{1}{2})^u}{k^u}f^{r u}(y^2)\nonumber\\
&=&(\frac{1}{2})^{\frac{\alpha}{r}}f^\alpha(x^2)+\frac{(1+k)^{\frac{\alpha}{r}}-(\frac{1}{2})^{\frac{\alpha}{r}}}{k^{\frac{\alpha}{r}}}f^\alpha(y^2),
\end{eqnarray}
where $0\leq\alpha\leq\frac{r}{2}$ as $0\leq u\leq\frac{1}{2}$, $k\geq1$, the first inequality is obtained by $f^r(x^2+y^2)\geq f^r(x^2)+f^r(y^2)$ for $r\geq\sqrt{2}$ and the second one is due to (9) of Lemma 1.

\end{proof}

For the $n$-qubit quantum state $\rho_{A_1|A_2A_3\cdots A_n}$, regarded as a bipartite state under bipartite partition $A_1|A_2A_3\cdots A_n$, the concurrence satisfies the monogamy inequality for $\alpha\geq2$ \cite{6}:
\begin{eqnarray}
C^\alpha(\rho_{A_1|A_2A_3\cdots A_n})\geq C^\alpha(\rho_{A_1A_2})+C^\alpha(\rho_{A_1A_3})+\cdots+C^\alpha(\rho_{A_1A_n}),
\end{eqnarray}
where $\rho_{A_1A_i}=\mathrm{tr}_{A_2\cdots A_{i-1}A_{i+1}\cdots A_n}(\rho)$, $i=2,\cdots,n$, are the reduced density matrices of $\rho$.

\begin{theorem}\label{thm:2}
For any $2\otimes 2\otimes2$ tripartite state $\rho_{A_1A_2A_3}\in H_{A_1}\otimes H_{A_2}\otimes H_{A_3}$ and real number $k\geq1$,

(1) if $E^r(\rho_{A_1A_3})\geq kE^r(\rho_{A_1A_2})$, then the EoF satisfies
\begin{eqnarray}
E^\alpha(\rho_{A_1|A_2A_3})\geq(\frac{1}{2})^{\frac{\alpha}{r}}E^\alpha(\rho_{A_1A_2})+\frac{(1+k)^{\frac{\alpha}{r}}-(\frac{1}{2})^{\frac{\alpha}{r}}}{k^{\frac{\alpha}{r}}}E^\alpha(\rho_{A_1A_3}),
\end{eqnarray}
where $0\leq\alpha\leq\frac{r}{2}$ and $r\geq\sqrt{2}$.

(2) if $E^r(\rho_{A_1A_2})\geq kE^r(\rho_{A_1A_3})$, then the EoF satisfies
\begin{eqnarray}
E^\alpha(\rho_{A_1|A_2A_3})\geq(\frac{1}{2})^{\frac{\alpha}{r}}E^\alpha(\rho_{A_1A_3})+\frac{(1+k)^{\frac{\alpha}{r}}-(\frac{1}{2})^{\frac{\alpha}{r}}}{k^{\frac{\alpha}{r}}}E^\alpha(\rho_{A_1A_2}),
\end{eqnarray}
where $0\leq\alpha\leq\frac{r}{2}$ and $r\geq\sqrt{2}$.
\end{theorem}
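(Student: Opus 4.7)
The plan is to reduce the theorem to a direct application of Lemma 1, taking the mixed-state monogamy of $E^r$ for $r \geq \sqrt{2}$ as the starting point. I work out part (1); part (2) is identical after interchanging $A_2$ and $A_3$.

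Set $a := E^r(\rho_{A_1A_2})$, $b := E^r(\rho_{A_1A_3})$, $c := E^r(\rho_{A_1|A_2A_3})$, and $u := \alpha/r$, so $u \in [0, \tfrac{1}{2}]$ since $\alpha \leq r/2$. The hypothesis $E^r(\rho_{A_1A_3}) \geq k\, E^r(\rho_{A_1A_2})$ then says $t := b/a \geq k$, which is exactly the range where Lemma 1 part (1) applies.

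The key input is the monogamy relation $c \geq a + b$, i.e.\ $E^r(\rho_{A_1|A_2A_3}) \geq E^r(\rho_{A_1A_2}) + E^r(\rho_{A_1A_3})$, for $2\otimes 2 \otimes 2$ mixed states and $r \geq \sqrt{2}$. For pure tripartite states this follows by viewing $|\varphi\rangle_{A_1|A_2A_3}$ as a $2 \otimes 4$ pure state and combining the Wootters identity $E = f(C^2)$, the CKW monogamy $C^2(|\varphi\rangle_{A_1|A_2A_3}) \geq C^2(\rho_{A_1A_2}) + C^2(\rho_{A_1A_3})$ (inequality~(21) with $n=3$, $\alpha=2$), and the scalar inequality $f^r(x^2+y^2) \geq f^r(x^2) + f^r(y^2)$ for $r\geq\sqrt{2}$ recalled just before Lemma~\ref{lemma:3}. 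Extension to arbitrary mixed $\rho_{A_1A_2A_3}$ then follows through the convex-roof definition of $E$.

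Given $c \geq a + b$, I raise both sides to the $u$-th power (legitimate since $u \geq 0$, so $(\cdot)^u$ is monotone on nonnegatives) and apply Lemma 1 part (1) with $x = u \in [0,\tfrac{1}{2}]$ and $t \geq k$:
\[
c^{u} \;\geq\; (a+b)^{u} \;=\; a^{u}(1+t)^{u} \;\geq\; a^{u}\Bigl[(\tfrac{1}{2})^{u} + \tfrac{(1+k)^{u}-(\tfrac{1}{2})^{u}}{k^{u}}\, t^{u}\Bigr] \;=\; (\tfrac{1}{2})^{u} a^{u} + \tfrac{(1+k)^{u}-(\tfrac{1}{2})^{u}}{k^{u}}\, b^{u}.
\]
Unpacking $c^u = E^\alpha(\rho_{A_1|A_2A_3})$, $a^u = E^\alpha(\rho_{A_1A_2})$, $b^u = E^\alpha(\rho_{A_1A_3})$ yields the claimed inequality. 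The main obstacle is justifying the mixed-state $E^r$-monogamy: the pure-state version is essentially immediate from Wootters plus CKW plus the scalar inequality, but extending it to general $2\otimes 2\otimes 2$ mixed states requires the nontrivial convex-roof extension along the concurrence axis already worked out in the literature. Once that is in hand, the remainder is a single invocation of Lemma 1.
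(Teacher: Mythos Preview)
Your proposal is correct and follows essentially the same route as the paper. The only structural difference is packaging: you first establish the $E^r$-monogamy $E^r(\rho_{A_1|A_2A_3})\geq E^r(\rho_{A_1A_2})+E^r(\rho_{A_1A_3})$ and then apply Lemma~\ref{lemma:1} directly, whereas the paper keeps everything on the concurrence side and invokes Lemma~\ref{lemma:3}, which bundles the scalar inequality $f^r(x^2+y^2)\geq f^r(x^2)+f^r(y^2)$ with Lemma~\ref{lemma:1}. Unwinding Lemma~\ref{lemma:3} gives exactly your chain.

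One point to sharpen: your phrase ``extension to arbitrary mixed $\rho_{A_1A_2A_3}$ then follows through the convex-roof definition of $E$'' is the step that actually carries weight, and it is not a routine convexity argument. What is needed, and what the paper uses explicitly, is the mixed-state inequality $E(\rho_{A_1|A_2A_3})\geq f\bigl(C^2(\rho_{A_1|A_2A_3})\bigr)$ for $n$-qubit states from~\cite{10}. With that in hand, the chain
\[
E^r(\rho_{A_1|A_2A_3})\;\geq\; f^r\bigl(C^2(\rho_{A_1|A_2A_3})\bigr)\;\geq\; f^r\bigl(C^2_{A_1A_2}+C^2_{A_1A_3}\bigr)\;\geq\; f^r(C^2_{A_1A_2})+f^r(C^2_{A_1A_3})\;=\;E^r_{A_1A_2}+E^r_{A_1A_3}
\]
gives your $c\geq a+b$ directly for mixed states, and the rest of your argument goes through verbatim.
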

\begin{proof}
Assuming $E^r(\rho_{A_1A_3})\geq kE^r(\rho_{A_1A_2})$, $k\geq1$, we have

\begin{eqnarray}
E^{\alpha}(\rho_{A_1|A_2A_3})&\geq&f^\alpha(C^2(\rho_{A_1|A_2A_3}))\nonumber\\
&\geq&f^\alpha(C^2(\rho_{A_1A_2})+C^2(\rho_{A_1A_3}))\nonumber\\
&\geq&(\frac{1}{2})^{\frac{\alpha}{r}}f^{\alpha}(C^2(\rho_{A_1A_2}))+\frac{(1+k)^{\frac{\alpha}{r}}-(\frac{1}{2})^{\frac{\alpha}{r}}}{k^{\frac{\alpha}{r}}}f^{\alpha}(C^2(\rho_{A_1A_3})) \nonumber\\
&=&(\frac{1}{2})^{\frac{\alpha}{r}}E^{\alpha}(\rho_{A_1A_2})+\frac{(1+k)^{\frac{\alpha}{r}}-(\frac{1}{2})^{\frac{\alpha}{r}}}{k^{\frac{\alpha}{r}}}E^{\alpha}(\rho_{A_1A_3}),
\end{eqnarray}
where $0\leq\alpha\leq\frac{r}{2}$, $r\geq\sqrt{2}$, the first inequality is obtained by $E(\rho_{A_1|A_2A_3})\geq f(C^2(\rho_{A_1|A_2A_3}))$ as qubit states in \cite{10}, the second one is due to inequality (21) and the fact that $f(x)$ is a monotonically increasing function, and the last inequality is due to Lemma 3. The equality holds since $E(\rho)=f(C^2(\rho))$ for 2-qubit states. Similar proof gives  inequality (23) by using Lemma 3.
\end{proof}

For simplicity, denote $E(\rho_{A_1A_i})$, $C(\rho_{A_1A_i})$, $E(\rho_{A_1|A_{j+1}\cdots A_n})$, $C(\rho_{A_1|A_{j+1}\cdots A_n})$ by $E_{A_1A_i}$, $C_{A_1A_i}$, $E_{A_1|A_{j+1}\cdots A_n}$, $C_{A_1|A_{j+1}\cdots A_n}$ respectively, where $i=2,\cdots,n-1$ and $j=1,\cdots,n-1$. Let $l=\frac{(1+k)^{\frac{\alpha}{r}}-(\frac{1}{2})^{\frac{\alpha}{r}}}{k^{\frac{\alpha}{r}}}$ with $0\leq\alpha\leq\frac{r}{2}$, $r\geq\sqrt{2}$ and $k\geq1$. The monogamy inequalities of the $\alpha$th power of the EoF for $n$-qubit quantum states are given by the following theorem for $0\leq\alpha\leq\frac{r}{2}$ and $r\geq\sqrt{2}$.
\begin{theorem}\label{thm:3}
For any $n$-qubit quantum state $\rho_{A_1A_2A_3\cdots A_n}$ and real number $k\geq1$, we have that

(1) if $kE_{A_1A_i}^r\leq E_{A_1|A_{i+1}\cdots A_n}^r$ for $i=2,\cdots,m$ and $E_{A_1A_j}^r\geq kE_{A_1|A_{j+1}\cdots A_n}^r$ for $j=m+1,\cdots,n-1$, $\forall$ $2\leq m\leq n-2$, $n\geq4$, then we have
\begin{eqnarray}
E_{A_1|A_2A_3\cdots A_n}^\alpha&\geq&(\frac{1}{2})^{\frac{\alpha}{r}}(E_{A_1A_2}^\alpha+lE_{A_1A_3}^\alpha+\cdots+l^{m-2}E_{A_1A_m}^\alpha)\nonumber\\
&+&l^m[E_{A_1A_{m+1}}^\alpha+(\frac{1}{2})^{\frac{\alpha}{r}}E_{A_1A_{m+2}}^\alpha+\cdots+(\frac{1}{2})^{\frac{(n-m-2)\alpha}{r}}E_{A_1A_{n-1}}^\alpha]\nonumber\\
&+&l^{m-1}(\frac{1}{2})^{\frac{(n-m-1)\alpha}{r}}E_{A_1A_n}^\alpha,
\end{eqnarray}
where $0\leq\alpha\leq\frac{r}{2}$ and $r\geq\sqrt{2}$.

(2) if $kE_{A_1A_i}^r\leq E_{A_1|A_{i+1}\cdots A_n}^r$ for $i=2,\cdots,n-1$ and $n\geq3$, then we have that
\begin{eqnarray}
E_{A_1|A_2A_3\cdots A_n}^\alpha\geq(\frac{1}{2})^{\frac{\alpha}{r}}(E_{A_1A_2}^\alpha+lE_{A_1A_3}^\alpha+\cdots+l^{n-3}E_{A_1A_{n-1}}^\alpha)+l^{n-2}E_{A_1A_n}^\alpha,
\end{eqnarray}
where $0\leq\alpha\leq\frac{r}{2}$ and $r\geq\sqrt{2}$.

(3) if $E_{A_1A_i}^r\geq kE_{A_1|A_{i+1}\cdots A_n}^r$ for $i=2,\cdots,n-1$ and $n\geq3$, then we have that
\begin{eqnarray}
E_{A_1|A_2A_3\cdots A_n}^\alpha\geq l(E_{A_1A_2}^\alpha+(\frac{1}{2})^{\frac{\alpha}{r}}C_{A_1A_3}^\alpha+\cdots+(\frac{1}{2})^{\frac{(n-3)\alpha}{r}}E_{A_1A_{n-1}}^\alpha)+(\frac{1}{2})^{\frac{(n-2)\alpha}{r}}E_{A_1A_n}^\alpha,
\end{eqnarray}
where $0\leq\alpha\leq\frac{r}{2}$ and $r\geq\sqrt{2}$.
\end{theorem}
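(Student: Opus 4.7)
The plan is to prove all three parts by iterating the bipartite Theorem~2 along the telescoping chain of splittings $A_1\,|\,A_i(A_{i+1}\cdots A_n)$ for $i=2,3,\ldots,n-1$, at each step treating the tail $A_{i+1}\cdots A_n$ as a single composite (multi-qubit) party. The ingredients at every level are precisely those used in the proof of Theorem~2: the qubit estimate $E(\rho_{A_1|B})\geq f(C^2(\rho_{A_1|B}))$, a CKW-type bipartite splitting $C^2_{A_1|A_iA_{i+1}\cdots A_n}\geq C^2_{A_1A_i}+C^2_{A_1|A_{i+1}\cdots A_n}$ (a block version of the multi-qubit CKW inequality underlying (21)), the monotonicity of $f$ together with the superadditivity $f^r(x^2+y^2)\geq f^r(x^2)+f^r(y^2)$ for $r\geq\sqrt{2}$, and finally Lemma~3. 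At the leaves of the recursion the 2-qubit identity $E=f(C^2)$ converts $f^\alpha(C^2_{A_1A_i})$ into $E^\alpha_{A_1A_i}$.

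For part (2), I would induct on $n$. The base case $n=3$ is exactly Theorem~2(1). For the inductive step, apply the above chain to the bipartition $A_1|A_2(A_3\cdots A_n)$; the hypothesis $kE^r_{A_1A_2}\leq E^r_{A_1|A_3\cdots A_n}$, combined with the equality $E^r_{A_1A_2}=f^r(C^2_{A_1A_2})$ on the two-qubit marginal, supplies the dominance condition that Lemma~3 requires and yields
\begin{equation*}
E^\alpha_{A_1|A_2\cdots A_n}\geq\left(\tfrac12\right)^{\alpha/r}E^\alpha_{A_1A_2}+l\,E^\alpha_{A_1|A_3\cdots A_n}.
\end{equation*}
Substituting the induction hypothesis for $E^\alpha_{A_1|A_3\cdots A_n}$ and distributing the leading factor $l$ reproduces (26) exactly.

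Part (3) follows by the same induction, except that each step now invokes the swapped (marginal-dominant) form of Theorem~2(2) under the reversed hypothesis $E^r_{A_1A_i}\geq kE^r_{A_1|A_{i+1}\cdots A_n}$, giving $E^\alpha_{A_1|A_i\cdots A_n}\geq l\,E^\alpha_{A_1A_i}+(1/2)^{\alpha/r}E^\alpha_{A_1|A_{i+1}\cdots A_n}$ at every level; unrolling produces (27). Part (1) is the hybrid of the two regimes: run the tail-dominant step of part (2) for $i=2,\ldots,m$, which peels off $E^\alpha_{A_1A_2},\ldots,E^\alpha_{A_1A_m}$ with accumulated coefficients $(1/2)^{\alpha/r}l^{i-2}$ and leaves a residual $l^{m-1}E^\alpha_{A_1|A_{m+1}\cdots A_n}$; then apply the marginal-dominant step of part (3) to this residual to unroll the remaining terms, contributing the block $l^m[E^\alpha_{A_1A_{m+1}}+(1/2)^{\alpha/r}E^\alpha_{A_1A_{m+2}}+\cdots+(1/2)^{(n-m-2)\alpha/r}E^\alpha_{A_1A_{n-1}}]$ and the final isolated term $l^{m-1}(1/2)^{(n-m-1)\alpha/r}E^\alpha_{A_1A_n}$. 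Collecting the coefficients reproduces (25).

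The main obstacle I foresee is justifying the block form of the CKW splitting at each level of the recursion, i.e.\ extending the 3-qubit CKW used in Theorem~2 to a multi-qubit analogue where the tail is grouped as a single higher-dimensional party. A related subtlety is that the hypothesis is stated in terms of $E^r_{A_1|A_{i+1}\cdots A_n}$, whereas Lemma~3 ultimately needs the corresponding dominance statement at the level of $f^r(C^2_{A_1|A_{i+1}\cdots A_n})$; the conversion $E\geq f(C^2)$ (with equality on the two-qubit leaves) must be threaded carefully through the recursion so that the tail-dominance hypothesis propagates in the correct direction at every step.
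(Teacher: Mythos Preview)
Your recursion carries the wrong quantity. The displayed step
\[
E^\alpha_{A_1|A_2\cdots A_n}\ \geq\ \bigl(\tfrac12\bigr)^{\alpha/r}E^\alpha_{A_1A_2}+l\,E^\alpha_{A_1|A_3\cdots A_n}
\]
is not obtainable from the Theorem~2 machinery: running that argument only gives
\[
E^\alpha_{A_1|A_2\cdots A_n}\ \geq\ f^\alpha(C^2_{A_1|A_2\cdots A_n})\ \geq\ \bigl(\tfrac12\bigr)^{\alpha/r}f^\alpha(C^2_{A_1A_2})+l\,f^\alpha(C^2_{A_1|A_3\cdots A_n}),
\]
and since for a multi-qubit tail one only knows $E_{A_1|A_3\cdots A_n}\geq f(C^2_{A_1|A_3\cdots A_n})$ (not $\leq$), replacing the last term by $l\,E^\alpha_{A_1|A_3\cdots A_n}$ would move the bound in the wrong direction. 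An induction whose carried quantity is $E^\alpha_{A_1|A_{i+1}\cdots A_n}$ therefore cannot close.

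The paper sidesteps this by never converting the tail back to $E$: the entire telescoping is done at the level of $f^\alpha(C^2_{\cdot})$, repeatedly applying the block CKW inequality $C^2_{A_1|A_iA_{i+1}\cdots A_n}\geq C^2_{A_1A_i}+C^2_{A_1|A_{i+1}\cdots A_n}$ for $2\otimes2\otimes2^{n-i}$ states (this is equation~(28), and it disposes of your first obstacle), then monotonicity of $f$, then Lemma~3; only at the two-qubit leaves is $f(C^2_{A_1A_i})=E_{A_1A_i}$ substituted. Your second obstacle---that the theorem's dominance hypotheses are phrased for $E^r$ while Lemma~3 needs them for $f^r(C^2)$---is genuine, and the paper does not address it: it simply invokes Lemma~3 under the $E^r$ hypotheses without justifying the translation. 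A rigorous version would state the hypotheses in terms of $C^2$ (equivalently $f^r(C^2)$) on the tails, which is what the iterated argument actually uses.
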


\begin{proof}
For arbitrary $2\otimes2\otimes2^{n-2}$ tripartite state, one has in \cite{6}
\begin{eqnarray}
C^2_{A_1|A_2A_3}\geq C^2_{A_1A_2}+C^2_{A_1A_3}.
\end{eqnarray}
For $n$-qubit quantum state $\rho_{A_1A_2A_3\cdots A_n}$, if $kE_{A_1A_i}^r\leq E_{A_1|A_{i+1}\cdots A_n}^r$ for $i=2,\cdots,m$, we have
\begin{eqnarray}
E_{A_1|A_2A_3\cdots A_n}^\alpha&\geq& f^\alpha(C^2_{A_1|A_2A_3\cdots A_n})\nonumber\\
&\geq&f^\alpha(C_{A_1A_2}^2+C_{A_1|A_3\cdots A_n}^2) \nonumber\\
&\geq&(\frac{1}{2})^{\frac{\alpha}{r}}f^\alpha(C_{A_1A_2}^2)+lf^\alpha(C_{A_1|A_3\cdots A_n}^2) \nonumber\\
&\geq&\cdots\nonumber\\
&\geq&(\frac{1}{2})^{\frac{\alpha}{r}}(f^\alpha(C_{A_1A_2}^2)+lf^\alpha(C_{A_1A_3}^2)+\cdots+l^{m-2}f^\alpha(C_{A_1A_m}^2))\nonumber\\
&+&l^{m-1}f^\alpha(C_{A_1|A_{m+1}\cdots A_n}^2)\nonumber\\
&=&(\frac{1}{2})^{\frac{\alpha}{r}}(E_{A_1A_2}^\alpha+lE_{A_1A_3}^\alpha+\cdots+l^{m-2}E_{A_1A_m}^\alpha)+l^{m-1}f^\alpha(C_{A_1|A_{m+1}\cdots A_n}^2),
\end{eqnarray}
where the first inequality follows from $E_{A_1|A_2A_3\cdots A_n}\geq f(C^2_{A_1|A_2A_3\cdots A_n})$ for the $n$-qubit mixed quantum states in \cite{10}, the second one is due to (28) and $f(x)$ being a monotonically increasing function. Using Lemma 3, we get the third inequality. Other inequalities are consequences of Lemma 3
and the last equality holds due to $E(\rho)=f(C^2(\rho))$ for 2-qubit states.

For $E_{A_1A_j}^r\geq kE_{A_1|A_{j+1}\cdots A_n}^r$ for $j=m+1,\cdots,n-1$,  similar proof gives the following inequality by using Lemma 3:
\begin{eqnarray}
f^\alpha(C_{A_1|A_{m+1}+\cdots+A_n}^2)&\geq& lf^\alpha(C_{A_1A_{m+1}}^2)+(\frac{1}{2})^{\frac{\alpha}{r}}f^\alpha(C_{A_1|A_{m+2}\cdots A_n}^2)\nonumber\\
&\geq&\cdots \nonumber\\
&\geq&l[E_{A_1A_{m+1}}^\alpha+(\frac{1}{2})^{\frac{\alpha}{r}}E_{A_1A_{m+2}}^\alpha+\cdots+(\frac{1}{2})^{\frac{(n-m-2)\alpha}{r}}E_{A_1A_{n-1}}^\alpha]\nonumber\\
&+&(\frac{1}{2})^{\frac{(n-m-1)\alpha}{r}}E_{A_1A_n}^\alpha.
\end{eqnarray}
Combining (29) and (30), one obtains (25). If all $kE_{A_1A_i}^r\leq E_{A_1|A_{i+1}\cdots A_n}^r$ for $i=2,\cdots,n-1$ or $E_{A_1A_i}^r\geq kE_{A_1|A_{i+1}\cdots A_n}^r$ for $i=2,\cdots,n-1$, we have the inequality (26) and (27).
\end{proof}

{\bf Remark 2.} Take tripartite quantum states as an example, when $E^r(\rho_{A_1A_3})\geq kE^r(\rho_{A_1A_2})$, the authors in [26] give $E^\alpha(\rho_{A_1|A_2A_3})\geq E^\alpha(\rho_{A_1A_2})+\frac{(1+k)^{\frac{\alpha}{r}}-1}{k^{\frac{\alpha}{r}}}E^\alpha(\rho_{A_1A_3})=\mu_1$. In Theorem 2, the $\alpha$th power of the EoF satisfies $E^\alpha(\rho_{A_1|A_2A_3})\geq(\frac{1}{2})^{\frac{\alpha}{r}}E^\alpha(\rho_{A_1A_2})+\frac{(1+k)^{\frac{\alpha}{r}}-(\frac{1}{2})^{\frac{\alpha}{r}}}{k^{\frac{\alpha}{r}}}E^\alpha(\rho_{A_1A_3})=\mu_2$. Let $\mu=\mu_2-\mu_1$, we find $\mu\geq0$ for $0\leq\alpha\leq\frac{r}{2}$ and $r\geq2$, so our results are tighter than that in \cite{26}.

{\bf Remark 3.} In addition to the EoF, our monogamy relations also work for other quantum correlation measures such as the concurrence by a similar method. In fact, for any $2\otimes 2\otimes2^{n-2}$ tripartite state $\rho_{A_1A_2A_3}\in H_{A_1}\otimes H_{A_2}\otimes H_{A_3}$, $0\leq\alpha\leq\frac{r}{2}$, $r\geq2$, and $k>1$, if $C^r(\rho_{A_1A_3})\geq kC^r(\rho_{A_1A_2})$, the concurrence satisfies
$C^\alpha(\rho_{A_1|A_2A_3})\geq (\frac{1}{2})^{\frac{\alpha}{r}}C^\alpha(\rho_{A_1A_2})+\frac{(1+k)^{\frac{\alpha}{r}}-(\frac{1}{2})^{\frac{\alpha}{r}}}{k^{\frac{\alpha}{r}}}C^\alpha(\rho_{A_1A_3})\geq C^\alpha(\rho_{A_1A_2})+\frac{(1+k)^{\frac{\alpha}{r}}-1}{k^{\frac{\alpha}{r}}}C^\alpha(\rho_{A_1A_3})\geq C^\alpha(\rho_{A_1A_2})+(2^{\frac{\alpha}{r}}-1)C^\alpha(\rho_{A_1A_3})$ since $\frac{(1+k)^{\frac{\alpha}{r}}-1}{k^{\frac{\alpha}{r}}}\geq 2^{\frac{\alpha}{r}}-1$. Thus the conclusion in Theorem 3 is also tighter than that in \cite{8}.
%Let $\tau$ is one of the quantum correlation measures above, we have the following results:

\textit{\textbf{Example 1.}} Consider the quantum state $\rho=|\varphi\rangle\langle\varphi|\in H_1^2\otimes H_2^2\otimes H_3^2$, written in the generalized Schmidt decomposition \cite{38}:
\begin{eqnarray}
|\varphi\rangle=\lambda_0|000\rangle+\lambda_1e^{i\theta}|100\rangle+\lambda_2|101\rangle+\lambda_3|110\rangle+\lambda_4|111\rangle,
\end{eqnarray}
where $0\leq\theta\leq\pi$, $\lambda_i\geq0$, $i=0,\cdots,4$ and $\sum_{i=0}^4\lambda_i^2=1$. We have $C(\rho_{A_1|A_2A_3})=2\lambda_0\sqrt{\lambda_2^2+\lambda_3^2+\lambda_4^2}$, $C(\rho_{A_1A_2})=2\lambda_0\lambda_2$, and $C(\rho_{A_1A_3})=2\lambda_0\lambda_3$. Let $\lambda_0=\lambda_3=\frac{1}{2}$, $\lambda_2=\frac{\sqrt{2}}{2}$, $\lambda_1=\lambda_4=0$, and $k=1.71$, then $E(\rho_{A_1|A_2A_3})=2-\frac{3}{4}\log3\approx0.81$, $E(\rho_{A_1A_2})=-\frac{2+\sqrt{2}}{4}\log{\frac{2+\sqrt{2}}{4}}-\frac{2-\sqrt{2}}{4}\log{\frac{2-\sqrt{2}}{4}}\approx0.60$, $E(\rho_{A_1A_3})=-\frac{2+\sqrt{3}}{4}\log{\frac{2+\sqrt{3}}{4}}-\frac{2-\sqrt{3}}{4}\log{\frac{2-\sqrt{3}}{4}}\approx0.35$. Thus, $E^\alpha(\rho_{A_1|A_2A_3})=0.81^\alpha$. By Theorem 2, the lower bound of $E^\alpha(\rho_{A_1|A_2A_3})$ is $z_1=(\frac{1}{2})^{\frac{\alpha}{r}}0.35^\alpha+\frac{(1+1.71)^{\frac{\alpha}{r}}-(\frac{1}{2})^{\frac{\alpha}{r}}}{1.71^{\frac{\alpha}{r}}}0.6^\alpha$. By Theorem 1 in \cite{26}, the lower bound of $E^\alpha(\rho_{A_1|A_2A_3})$ is $z_2=0.35^\alpha+\frac{(1+1.71)^{\frac{\alpha}{r}}-1}{1.71^{\frac{\alpha}{r}}}0.6^\alpha$. Fig. 1 shows that our result is tighter than that of \cite{26}. To see this clearer, let $z=z_1-z_2=((\frac{1}{2})^{\frac{\alpha}{r}}-1)0.35^\alpha+\frac{1-(\frac{1}{2})^{\frac{\alpha}{r}}}{1.71^{\frac{\alpha}{r}}}0.6^\alpha$. Fig. 2 depicts the value of $z$ for $0\leq\alpha\leq1$ and $r\geq\sqrt{2}$, which confirms that
 Theorem 2 is indeed stronger than that of [26].

\begin{figure}[!htb]
\centerline{\includegraphics[width=0.6\textwidth]{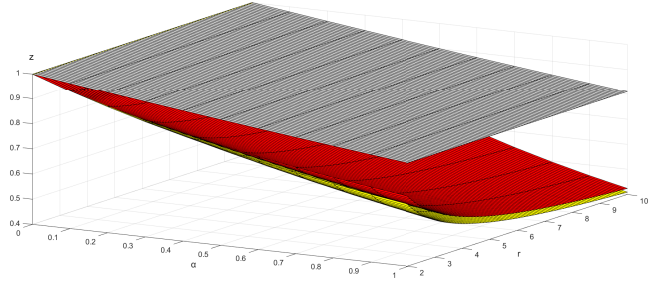}}
\renewcommand{\figurename}{Fig.}
\caption{%The axis $z$ is the lower bound of E of F of the quantum state.
The gray surface represents the EoF of the state $|\varphi\rangle$. The lower bound in \cite{26} is shown by the yellow surface and the red surface is our result in Theorem 2.}
\end{figure}

\begin{figure}[!htb]
\centerline{\includegraphics[width=0.6\textwidth]{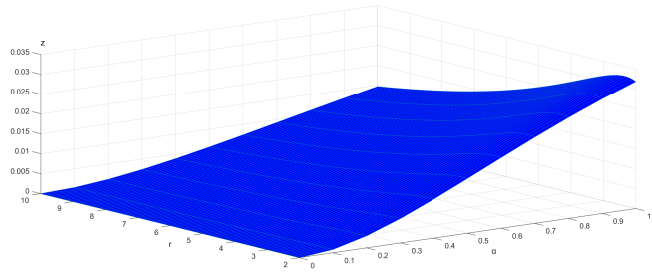}}
\renewcommand{\figurename}{Fig.}
\caption{The blue surface is the difference $z$ between the lower bounds of the entanglement of formation $z_1$ from Theorem 2 and that of in \cite{26}.}
\end{figure}

\section{Polygamy relations of quantum correlations}

In this section, we study the polygamy inequalities for multipartite quantum systems. Recall that the unified-$(q, s)$ entropy of a quantum state $\rho$ satisfies the subadditivity \cite{39}:
\begin{eqnarray}
S_{q,s}(\rho_{A_1A_2})\leq S_{q,s}(\rho_{A_1})+S_{q,s}(\rho_{A_2}),
\end{eqnarray}
where $q>1$, $qs\geq1$. Based on this, we obtain the following result.
%Using subadditivity of unified-$(q, s)$ entropy, we obtain the following result.
\begin{theorem}\label{thm:4}
For any $n$-qubit quantum state $\rho_{A_1A_2\cdots A_m|A_{m+1}\cdots A_n}$ and real number $k\geq1$, suppose that $E_{q,s}(\rho_{A_j|\overline{A_j}})\geq E_{q,s}(\rho_{A_{j+1}|\overline{A_{j+1}}})$, then %we have that
\begin{eqnarray}
E_{q,s}^\beta(\rho_{A_1A_2\cdots A_m|A_{m+1}\cdots A_n})\leq(\frac{1}{2})^\beta(l^{m-1}E_{q,s}(\rho_{A_1|\overline{A_1}})+\cdots+E_{q,s}(\rho_{A_m|\overline{A_m}})),
\end{eqnarray}
where $l=\frac{(1+k)^\beta-(\frac{1}{2})^\beta}{k^\beta}$, $\beta\geq1$, $q>1$, $qs\geq1$, and $\overline{A_i}$, $i=1,\cdots,m$, are the complements of $A_i$ in $\{A_1A_2\cdots A_mA_{m+1}\cdots A_n\}$.
\end{theorem}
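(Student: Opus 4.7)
The plan is to combine the subadditivity of the unified $(q,s)$-entropy from equation (32) with the power inequality of Lemma~\ref{lemma:2} part (2). Structurally this parallels the monogamy argument of Theorem~\ref{thm:1}, but uses subadditivity in place of the monogamy bound (14) and invokes the concave direction of the power inequality rather than the convex one.

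First, I would establish the underlying linear polygamy bound
\[
E_{q,s}(\rho_{A_1A_2\cdots A_m|A_{m+1}\cdots A_n}) \leq \sum_{i=1}^m E_{q,s}(\rho_{A_i|\overline{A_i}}).
\]
For a pure state this is immediate from iterating subadditivity on the reduced state $\rho_{A_1\cdots A_m}$, since $E_{q,s}(|\psi\rangle_{X|\overline{X}}) = S_{q,s}(\rho_X)$ turns the entropic inequality $S_{q,s}(\rho_{A_1\cdots A_m}) \leq \sum_{i=1}^m S_{q,s}(\rho_{A_i})$ directly into the desired entanglement inequality. For a mixed $\rho_{A_1\cdots A_n}$, I would pick an optimal convex-roof decomposition realizing the left-hand side and apply subadditivity fiberwise. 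The parameter range $q>1$, $qs\geq 1$ is precisely where (32) is available.

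Next, since $\beta \geq 1$ and $t \mapsto t^\beta$ is monotone on $t \geq 0$, raising both sides to the $\beta$-th power yields
\[
E_{q,s}^\beta(\rho_{A_1A_2\cdots A_m|A_{m+1}\cdots A_n}) \leq \Big(\sum_{i=1}^m E_{q,s}(\rho_{A_i|\overline{A_i}})\Big)^\beta.
\]
I would then invoke Lemma~\ref{lemma:2} part (2) with $x=\beta$ and $p_i := E_{q,s}(\rho_{A_i|\overline{A_i}})$. The monotonicity hypothesis $E_{q,s}(\rho_{A_j|\overline{A_j}}) \geq E_{q,s}(\rho_{A_{j+1}|\overline{A_{j+1}}})$ is exactly the ordering $p_1\geq p_2\geq\cdots\geq p_m$ that the lemma requires, and it produces
\[
\Big(\sum_{i=1}^m p_i\Big)^\beta \leq \Big(\tfrac{1}{2}\Big)^\beta\sum_{i=1}^m l^{m-i}p_i^\beta,
\]
with $l = \big((1+k)^\beta-(\tfrac{1}{2})^\beta\big)/k^\beta$. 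Reinserting $p_i$ gives the stated polygamy bound.

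The main obstacle I anticipate is the mixed-state step in the linear polygamy inequality: subadditivity is an entropic fact about reduced states of a \emph{single} pure state, whereas the right-hand side $E_{q,s}(\rho_{A_i|\overline{A_i}})$ is a convex-roof minimum. Translating the fiberwise bound into a bound on each $E_{q,s}(\rho_{A_i|\overline{A_i}})$ requires matching an optimal ensemble for the left with ensembles that are simultaneously admissible for each $A_i$-bipartition, and is the only nontrivial ingredient; everything else reduces to the already-proved analytic Lemma~\ref{lemma:2}.
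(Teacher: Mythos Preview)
Your proposal is essentially the paper's own argument: first obtain the linear polygamy bound $E_{q,s}(\rho_{A_1\cdots A_m|A_{m+1}\cdots A_n})\leq\sum_{i=1}^m E_{q,s}(\rho_{A_i|\overline{A_i}})$ from subadditivity (32) applied fiberwise through a convex-roof decomposition, then raise to the $\beta$th power and feed the resulting sum into Lemma~\ref{lemma:2}(2) using the assumed ordering. The obstacle you flag---that the fiberwise subadditivity yields $\min_{\{p_j,|\varphi_j\rangle\}}\sum_j p_j\sum_i S_{q,s}(\rho^j_{A_i})$, whereas $\sum_i E_{q,s}(\rho_{A_i|\overline{A_i}})=\sum_i\min_{\{p_j,|\varphi_j\rangle\}}\sum_j p_j S_{q,s}(\rho^j_{A_i})$, and in general $\min\sum\geq\sum\min$---is real and is the same step the paper passes over as an equality without further comment; so your plan matches the paper in both structure and in the one delicate point.
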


\begin{proof}
For a  quantum state $\rho_{A_1A_2\cdots A_m|A_{m+1}\cdots A_n}=\sum_jp_j|\varphi_j\rangle_{A_1A_2\cdots A_m|A_{m+1}\cdots A_n}\langle\varphi_j|$, according to the definition of unified-$(q, s)$ entanglement $E_{q,s}(|\varphi\rangle_{AB}):=S_{q,s}(\rho_A)$ with $q,s\geq0$ from (2) and $E_{q,s}(\rho_{A_j|\overline{A_j}})\geq E_{q,s}(\rho_{A_{j+1}|\overline{A_{j+1}}})$, we have
\begin{eqnarray}
E_{q,s}^\beta(\rho_{A_1A_2\cdots A_m|A_{m+1}\cdots A_n})&=&(\min\sum_jp_jE_{q,s}(|\varphi_j\rangle_{A_1A_2\cdots A_m|A_{m+1}\cdots A_n}))^\beta\nonumber\\
&=&(\min\sum_jp_jS_{q,s}(\rho^j_{A_1A_2\cdots A_m}))^\beta\nonumber\\
&\leq&(\min\sum_j\sum_{i=1}^mp_jS_{q,s}(\rho^j_{A_i}))^\beta\nonumber\\
&=&(\min\sum_j\sum_{i=1}^mp_jE_{q,s}(|\varphi_j\rangle_{A_i|\overline{A_i}}))^\beta\nonumber\\
&=&(\sum_{i=1}^mE_{q,s}(\rho_{A_i|\overline{A_i}}))^\beta\nonumber\\
&\leq&(\frac{1}{2})^\beta(l^{m-1}E_{q,s}(\rho_{A_1|\overline{A_1}})+\cdots+E_{q,s}(\rho_{A_m|\overline{A_m}})),
\end{eqnarray}
where $l=\frac{(1+k)^\beta-(\frac{1}{2})^\beta}{k^\beta}$, $\beta\geq1$, $k\geq1$, $q>1$, $qs\geq1$, $\overline{A_i}$, $i=1,\cdots,m$, are the complements of $A_i$ in $\{A_1A_2\cdots A_mB_1B_2\cdots B_n\}$, the first equality is due to equality (3), the first and second inequality are due to (32) and inequality (12) of Lemma 2 respectively. The first three minima are taken over all possible pure state decompositions of the mixed state $\rho_{A_1A_2\cdots A_m|B_1B_2\cdots B_n}$ while the last minimum is taken over all pure state decompositions of $\rho_{A_i|\overline{A_i}}$.
\end{proof}
{\bf Remark 4.} Theorem 4 provides a general polygamy relation of the unified-$(q, s)$ entanglement for the multipartite quantum system under arbitrary bipartition with $\beta\geq1$. When $s$ tends to $0$ or $1$, the polygamy inequalities of the R\'{e}nyi-$q$ entanglement or the Tsallis-$q$ entanglement are obtained respectively. When $q$ tends to 1, the polygamy inequality for EoF also follows.

In \cite{40}, the authors give the polygamy inequality of entanglement for an n-qubit quantum state $\rho_{A_1A_2A_3\cdots A_n}$, i.e, if there are at least two states such that $C(\rho_{A_1A_{j_1}})C(\rho_{A_1A_{j_2}})\neq0$ for $j_1\neq j_2\in\{2, \cdots, n\}$, then
\begin{eqnarray}
E^s(\rho_{A_1|A_2A_3\cdots A_n})\leq\sum_{i=2}^nE^s(\rho_{A_1A_i}),
\end{eqnarray}
where $0\leq s\leq s_0$, $0<s_0\leq\sqrt{2}$ and $\sum_{i=2}^nE^{s_0}(\rho_{A_1A_i})=1$.
Using inequality (35), one can prove the following Theorem.
\begin{theorem}\label{thm:5} Let $\rho_{A_1A_2A_3}$ be 
a tripartite state in $H_{A_1}\otimes H_{A_2}\otimes H_{A_3}$ and $k\geq1$ a real number.

(1) If $E^s(\rho_{A_1A_3})\geq kE^s(\rho_{A_1A_2})$, then the EoF satisfies
\begin{eqnarray}
E^\beta(\rho_{A_1|A_2A_3})\leq(\frac{1}{2})^{\frac{\beta}{s}}E^\beta(\rho_{A_1A_2})+\frac{(1+k)^{\frac{\beta}{s}}-(\frac{1}{2})^{\frac{\beta}{s}}}{k^{\frac{\beta}{s}}}E^\beta(\rho_{A_1A_3}),
\end{eqnarray}
where $\beta\geq \max\{1, s\}$, $0\leq s\leq s_0$, $0<s_0\leq\sqrt{2}$ and $E^{s_0}(\rho_{A_1A_2})+E^{s_0}(\rho_{A_1A_3})=1$.

(2) If $E^s(\rho_{A_1A_2})\geq kE^s(\rho_{A_1A_3})$, then the EoF satisfies
\begin{eqnarray}
E^\beta(\rho_{A_1|A_2A_3})\leq(\frac{1}{2})^{\frac{\beta}{s}}E^\beta(\rho_{A_1A_3})+\frac{(1+k)^{\frac{\beta}{s}}-(\frac{1}{2})^{\frac{\beta}{s}}}{k^{\frac{\beta}{s}}}E^\beta(\rho_{A_1A_2}),
\end{eqnarray}
where $\beta\geq \max\{1, s\}$, $0\leq s\leq s_0$, $0<s_0\leq\sqrt{2}$ and $E^{s_0}(\rho_{A_1A_2})+E^{s_0}(\rho_{A_1A_3})=1$.
\end{theorem}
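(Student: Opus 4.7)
The plan is to combine the polygamy inequality (35) with part (2) of Lemma 1, using the hypothesis $\beta \geq \max\{1,s\}$ to guarantee that the key exponent $\beta/s$ satisfies $\beta/s \geq 1$. Since part (2) of the theorem follows from part (1) by interchanging the roles of $A_2$ and $A_3$, I will focus on establishing (36).

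First, I would invoke (35) in the tripartite case $n=3$. The hypothesis $E^{s_0}(\rho_{A_1A_2}) + E^{s_0}(\rho_{A_1A_3}) = 1$ with $0 \leq s \leq s_0 \leq \sqrt{2}$ is exactly what (35) requires, so it yields
\begin{equation*}
E^s(\rho_{A_1|A_2A_3}) \leq E^s(\rho_{A_1A_2}) + E^s(\rho_{A_1A_3}).
\end{equation*}
Since $\beta/s \geq 1$, I would then raise both sides to the $\beta/s$ power, using that $x \mapsto x^{\beta/s}$ is monotone non-decreasing on $[0,\infty)$, to obtain
\begin{equation*}
E^\beta(\rho_{A_1|A_2A_3}) \leq \bigl(E^s(\rho_{A_1A_2}) + E^s(\rho_{A_1A_3})\bigr)^{\beta/s}.
\end{equation*}

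Next, setting $u = E^s(\rho_{A_1A_2})$ and $v = E^s(\rho_{A_1A_3})$, the assumption $v \geq k u$ translates, in the generic case $u > 0$, into $t := v/u \geq k$. Factoring $(u+v)^{\beta/s} = u^{\beta/s}(1+t)^{\beta/s}$ and applying Lemma 1(2) with exponent $x = \beta/s \geq 1$ gives
\begin{equation*}
(1+t)^{\beta/s} \leq \bigl(\tfrac{1}{2}\bigr)^{\beta/s} + \frac{(1+k)^{\beta/s} - (\tfrac{1}{2})^{\beta/s}}{k^{\beta/s}}\, t^{\beta/s}.
\end{equation*}
Multiplying through by $u^{\beta/s}$ and identifying $u^{\beta/s} = E^\beta(\rho_{A_1A_2})$ together with $(tu)^{\beta/s} = v^{\beta/s} = E^\beta(\rho_{A_1A_3})$ yields (36) directly.

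The argument is largely bookkeeping; the one point needing a separate check is the degenerate case $u = 0$, in which (36) reduces to $E^\beta(\rho_{A_1|A_2A_3}) \leq \frac{(1+k)^{\beta/s} - (\tfrac{1}{2})^{\beta/s}}{k^{\beta/s}} E^\beta(\rho_{A_1A_3})$. This follows from (35) combined with the elementary inequality $(1+k)^{\beta/s} \geq k^{\beta/s} + (\tfrac{1}{2})^{\beta/s}$, valid for $\beta/s \geq 1$ and $k \geq 1$ (itself a consequence of $(1+k)^x \geq 1 + k^x$ for $x \geq 1$, with $1 \geq (\tfrac{1}{2})^x$). The only real conceptual content is the reparametrization that turns Lemma 1(2) into a statement about EoF powers; I expect no substantive obstacle beyond this.
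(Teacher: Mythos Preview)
Your proposal is correct and follows essentially the same route as the paper: apply (35) for $n=3$, raise to the power $\beta/s\geq 1$, factor out $E^{s}(\rho_{A_1A_2})$, and invoke Lemma~1(2) with $t=E^{s}(\rho_{A_1A_3})/E^{s}(\rho_{A_1A_2})\geq k$. Your separate treatment of the degenerate case $u=0$ is extra care the paper omits (it tacitly assumes $E^{s}(\rho_{A_1A_2})>0$); note, however, that the hypothesis underlying (35) already requires both reduced concurrences to be nonzero, so this case does not actually arise.
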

\begin{proof}
Assuming $E^s(\rho_{AC})\geq kE^s(\rho_{AB})>0$, we have
\begin{eqnarray}
E^{\beta}(\rho_{A_1|A_2A_3})&\leq&(E^s(\rho_{A_1A_2})+E^s(\rho_{A_1A_3}))^x\nonumber\\
&=&E^{sx}(\rho_{A_1A_2})(1+\frac{E^s(\rho_{A_1A_3})}{E^s(\rho_{A_1A_2})})^x\nonumber\\
&\leq& E^{sx}(\rho_{A_1A_2})[(\frac{1}{2})^x+\frac{(1+k)^x-(\frac{1}{2})^x}{k^x}(\frac{E^s(\rho_{A_1A_3})}{E^s(\rho_{A_1A_2})})^x]\nonumber\\
&=&(\frac{1}{2})^{\frac{\beta}{s}}E^{\beta}(\rho_{A_1A_2})+\frac{(1+k)^{\frac{\beta}{s}}-(\frac{1}{2})^{\frac{\beta}{s}}}{k^{\frac{\beta}{s}}}E^{\beta}(\rho_{A_1A_3}),
\end{eqnarray}
where $\beta\geq \max\{1, s\}$, $0\leq s\leq s_0$, $0<s_0\leq\sqrt{2}$ and $E^{s_0}(\rho_{A_1A_2})+E^{s_0}(\rho_{A_1A_3})=1$, the first inequality is due to (35) and the second one follows from (10) of Lemma 1. Similar argument shows inequality (37) by using Lemma 1.
\end{proof}
Simply denote $E(\rho_{A_1A_i})$ $(i=2,\cdots,n-1)$ by $E_{A_1A_i}$ , $E(\rho_{A_1|A_{j+1}\cdots A_n})$ $(j=1,\cdots,n-1)$ by $E_{A_1|A_{j+1}\cdots A_n}$ and $l=\frac{(1+k)^{\frac{\beta}{s}}-(\frac{1}{2})^{\frac{\beta}{s}}}{k^{\frac{\beta}{s}}}$. Using similar idea of Theorem 5, the polygamy inequality of the $\beta$th power of EoF for an $n$-qubit quantum state is obtained in
the following theorem for $\beta\geq s$, $0\leq s\leq s_0$ and $0<s_0\leq\sqrt{2}$.

\begin{theorem}\label{thm:6}
For any $n$-qubit quantum state $\rho_{A_1A_2A_3\cdots A_n}$ and real number $k\geq1$, we have the following results:

(1) If $kE_{A_1A_i}^s\leq \sum_{j=i+1}^nE^s_{A_1A_j}$ for $i=2,\cdots,m$ and $E_{A_1A_i}^s\geq k\sum_{j=i+1}^nE^s_{A_1A_j}$ for $i=m+1,\cdots,n-1$, $\forall 2\leq m\leq n-2$, $n\geq4$, then
\begin{eqnarray}
E_{A_1|A_2A_3\cdots A_n}^{\beta} &\leq&(\frac{1}{2})^{\frac{\beta}{s}}(E_{A_1A_2}^\beta+lE_{A_1A_3}^\beta+\cdots+l^{m-2}E_{A_1A_m}^\beta)\nonumber\\
&+&l^m[E_{A_1A_{m+1}}^\beta+(\frac{1}{2})^{\frac{\beta}{s}}E_{A_1A_{m+2}}^\beta+\cdots+(\frac{1}{2})^{\frac{(n-m-2)\beta}{s}}E_{A_1A_{n-1}}^\beta]\nonumber\\
&+&l^{m-1}(\frac{1}{2})^{\frac{(n-m-1)\beta}{s}}E_{A_1A_n}^\beta,
\end{eqnarray}
where $\beta\geq s$, $0\leq s\leq s_0$, $0<s_0\leq\sqrt{2}$, and $\sum_{i=2}^nE^{s_0}(\rho_{A_1A_i})=1$.

(2) If $kE_{A_1A_i}^s\leq \sum_{j=i+1}^nE^s_{A_1A_j}$ for $i=2,\cdots,n-1$ and $n\geq3$, then
\begin{eqnarray}
E_{A_1|A_2A_3\cdots A_n}^\beta\leq(\frac{1}{2})^{\frac{\beta}{s}}(E_{A_1A_2}^\beta+lE_{A_1A_3}^\beta+\cdots+l^{n-3}E_{A_1A_{n-1}}^\beta)+l^{n-2}E_{A_1A_n}^\beta,
\end{eqnarray}
where $\beta\geq s$, $0\leq s\leq s_0$, $0<s_0\leq\sqrt{2}$, and $\sum_{i=2}^nE^{s_0}(\rho_{A_1A_i})=1$.

(3) If $E_{A_1A_i}^s\geq k\sum_{j=i+1}^nE^s_{A_1A_j}$ for $i=2,\cdots,n-1$ and $n\geq3$, then
\begin{eqnarray}
E_{A_1|A_2A_3\cdots A_n}^\beta\leq l[E_{A_1A_2}^\beta+(\frac{1}{2})^{\frac{\beta}{s}}E_{A_1A_3}^\beta+\cdots+(\frac{1}{2})^{\frac{(n-3)\beta}{s}}E_{A_1A_{n-1}}^\beta]+(\frac{1}{2})^{\frac{(n-2)\beta}{s}}E_{A_1A_n}^\beta,
\end{eqnarray}
where $\beta\geq s$, $0\leq s\leq s_0$, $0<s_0\leq\sqrt{2}$, and $\sum_{i=2}^nE^{s_0}(\rho_{A_1A_i})=1$.
\end{theorem}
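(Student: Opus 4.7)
The plan is to prove Theorem 6 by iterating the polygamy inequality (35) together with inequality (10) of Lemma 1, in close analogy with the proof of Theorem 3 but with all inequalities reversed. The starting point is to raise (35) to the power $\beta/s \geq 1$, using monotonicity of $x \mapsto x^{\beta/s}$ on nonnegative reals, to obtain
\[
E_{A_1|A_2A_3\cdots A_n}^{\beta} \leq \Bigl(\sum_{i=2}^n E_{A_1A_i}^s\Bigr)^{\beta/s}.
\]
The normalization hypothesis $\sum_{i=2}^n E^{s_0}_{A_1A_i}=1$ is what licenses (35) on the chosen range $0\leq s\leq s_0$. It then remains to peel the sum on the right apart, one term at a time, using Lemma 1(10): for $x=\beta/s\geq 1$ and $\tau\geq k$ one has $(1+\tau)^x \leq (1/2)^x + l\,\tau^x$ with $l = ((1+k)^x-(1/2)^x)/k^x$.

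For case (1), I would split the iteration into two phases matched to the two hypothesis ranges. In the first phase, for each $i=2,\ldots,m$, I factor the leftmost summand out:
\[
\Bigl(\sum_{j=i}^n E_{A_1A_j}^s\Bigr)^{\beta/s} = E_{A_1A_i}^{\beta}\bigl(1+\tau_i\bigr)^{\beta/s},\qquad \tau_i = \frac{\sum_{j=i+1}^n E_{A_1A_j}^s}{E_{A_1A_i}^s}\geq k,
\]
and Lemma 1(10) upper-bounds this by $(1/2)^{\beta/s}E_{A_1A_i}^{\beta} + l\bigl(\sum_{j=i+1}^n E_{A_1A_j}^s\bigr)^{\beta/s}$. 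Running $i$ from $2$ to $m$ peels off the first bracket of (39) and leaves a residual $l^{m-1}\bigl(\sum_{j=m+1}^n E_{A_1A_j}^s\bigr)^{\beta/s}$. In the second phase, the hypothesis reverses, so for $i=m+1,\ldots,n-1$ I factor out the tail instead:
\[
\Bigl(\sum_{j=i}^n E_{A_1A_j}^s\Bigr)^{\beta/s} = \Bigl(\sum_{j=i+1}^n E_{A_1A_j}^s\Bigr)^{\beta/s}(1+\tau_i')^{\beta/s},\qquad \tau_i' = \frac{E_{A_1A_i}^s}{\sum_{j=i+1}^n E_{A_1A_j}^s}\geq k,
\]
and Lemma 1(10) bounds this by $(1/2)^{\beta/s}\bigl(\sum_{j=i+1}^n E_{A_1A_j}^s\bigr)^{\beta/s} + l\,E_{A_1A_i}^{\beta}$. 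At the terminal step ($i=n-1$) the tail collapses to the single term $E_{A_1A_n}^s$, and the hypothesis $E_{A_1A_{n-1}}^s \geq k E_{A_1A_n}^s$ is exactly what is needed to invoke Lemma 1(10) one last time. Collecting all the accumulated prefactors yields (39) exactly.

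Cases (2) and (3) are the degenerate endpoints $m=n-1$ and $m=1$ of the two-phase argument, in which only the first or only the second peeling runs, so (40) and (41) follow without any new ingredient. The main obstacle is purely combinatorial bookkeeping: at each peeling step one type of term inherits a factor $(1/2)^{\beta/s}$ while the other inherits a factor $l$, and I must keep the powers of $l$ and $(1/2)^{\beta/s}$ straight as they compound through $n-2$ iterations. A secondary point to verify carefully is that the hypothesis at each step is precisely the $\tau\geq k$ (or $\tau'\geq k$) condition required by Lemma 1(10), so that the iteration remains valid all the way down to the last two terms; this is immediate from the way the hypotheses in (1)--(3) are indexed against the cumulative tails $\sum_{j=i+1}^n E_{A_1A_j}^s$.
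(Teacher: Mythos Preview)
Your proposal is correct and follows essentially the same approach as the paper, which states that Theorem 6 is obtained ``using similar idea of Theorem 5'': raise inequality (35) to the power $\beta/s\geq 1$, then iteratively apply Lemma 1(10) to peel off one $E_{A_1A_i}^s$ term at a time, with the two-phase splitting in case (1) exactly mirroring the structure of Theorem 3. Your identification of the ratios $\tau_i,\tau_i'\geq k$ with the theorem's hypotheses and the bookkeeping of the accumulated $l$ and $(1/2)^{\beta/s}$ factors are accurate, and cases (2) and (3) are indeed the single-phase specializations.
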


\textit{\textbf{Example 2.}} Consider the W state $|\varphi\rangle=\frac{1}{\sqrt{3}}(|100\rangle+|010\rangle+|001\rangle)$. We have $E(\rho_{A_1|A_2A_3})\approx0.92$, and $E(\rho_{A_1A_2})=E(\rho_{A_1A_3})\approx0.55$. Then $k=1$. Let $E^{s_0}(\rho_{A_1A_2})+E^{s_0}(\rho_{A_1A_3})=1$, then we get $s_0\approx1.16$. We can get $E^\beta(\rho_{A_1|A_2A_3})\leq(\frac{1}{2})^{\frac{\beta}{s}}E^\beta(\rho_{A_1A_2})+\frac{(1+k)^{\frac{\beta}{s}}-(\frac{1}{2})^{\frac{\beta}{s}}}{k^{\frac{\beta}{s}}}E^\beta(\rho_{A_1A_3})=2^{\frac{\beta}{s}}0.55^\beta$ for $0< s\leq1.16$ and $\beta\geq s$. In Fig. 3, we find that our result is tighter when s is larger.
\begin{figure}[!htb]
\centerline{\includegraphics[width=0.6\textwidth]{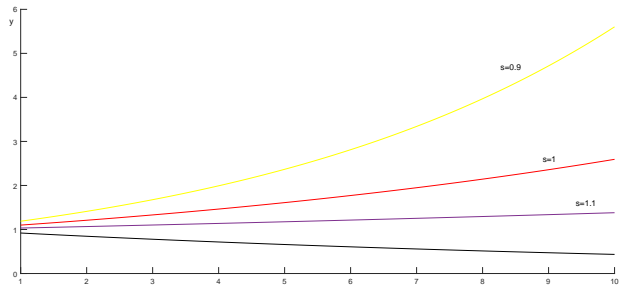}}
\renewcommand{\figurename}{Fig.}
\caption{The y axis is the EoF and its upper bound. The yellow, red, or purple line represents the upper bound from our result for $s=0.9$, $s=1$, or $s=1.1$ respectively and the black line represents the EoF of $|\varphi\rangle$.}
\end{figure}

\section{ Conclusion }
Monogamy and polygamy inequalities of measures for quantum correlation
are one of the fundamental properties for multipartite quantum systems.
In this paper, we have formulated
monogamy inequalities of the unified entanglement for multipartite quantum states under arbitrary bipartition. In particular, we have derived the unified
monogamy inequality of the $\alpha$th ($0\leq\alpha\leq\frac{r}{2}, r\geq\sqrt{2}$) power of the EoF for $2\otimes2\otimes2$ quantum states. Similarly, analytical monogamy inequalities for the $n$-qubit states have been presented. The same method is generalized to the monogamy relations of quantum correlation for multipartite quantum systems. With examples, we have shown that
our results are tighter than the existing ones. Moreover, we have presented for the polygamy inequality of the $\beta$th ($\beta\geq \max\{1, s\}$, $0\leq s\leq s_0$, $0<s_0\leq\sqrt{2}$) power of the EoF for $2\otimes2\otimes2$ quantum states and generalized to the $n$-qubit quantum systems.

\bigskip

\textbf {Acknowledgements}
This work is partially supported by Simons Foundation grant no. 523868
and National Natural Science Foundation of China grant no. 12126351.
%\cite{lable}

\bigskip

\textbf{Data Availability Statement}

All data generated during the study are included in the article.

\bigskip

%\end{CJK*}

\end{document}